\documentclass[11pt]{article} 

\usepackage{fullpage} 
\usepackage{bbm}
\usepackage{graphicx}
\usepackage{upref}
\usepackage{enumerate}
\usepackage{latexsym}
\usepackage{color,graphics}
\usepackage{comment} 
\usepackage{relsize}

\usepackage[section]{algorithm} 
\usepackage{amsmath,amsthm,amssymb,algorithmic,epsfig,graphicx, tikz}

\usepackage{amsfonts}
\usepackage{varioref}
\usepackage[ansinew]{inputenc}

\usepackage{amsmath}
\usepackage{amsfonts}
\usepackage{tikz} 

\usepackage{amssymb}
\usepackage{varioref}
\usepackage[ansinew]{inputenc}

\newtheorem{theorem}{Theorem}[section]
\newtheorem{claim}[theorem]{Claim}
\newtheorem{lemma}[theorem]{Lemma}
\newtheorem{corollary}[theorem]{Corollary}
\newtheorem{proposition}[theorem]{Proposition}
\newtheorem{fact}[theorem]{Fact}

\newtheorem{defn}[theorem]{Definition}

\newtheorem*{thm:composite}{Theorem \ref{thm:composite}}
\newtheorem*{thm:large}{Theorem \ref{thm:large}}

\def\Z{{\mathbb Z}}

\def\C{{\mathbb C}}
\def\F{{\mathbb F}}

\newcommand{\ket}[1]{{|{#1}\rangle}}

\newcommand{\poly}{\mathrm{poly}}

\newcommand\supp{{\mathrm{supp}}}

\newcommand{\HMS}{\mbox{\rmfamily\textsc{HMS}}}

\newcommand{\LPN}{\mbox{\rmfamily\textsc{LPN}}}
\newcommand{\LPSN}{\mbox{\rmfamily\textsc{LPSN}}}
\newcommand{\LWE}{\mbox{\rmfamily\textsc{LWE}}}
\newcommand{\LFD}{\mbox{\rmfamily\textsc{LFD}}}
\newcommand{\LFS}{\mbox{\rmfamily\textsc{LFS}}}
\newcommand{\HT}{\mbox{\rmfamily\textsc{HS}}}

\def\ket#1{\mathinner{|{#1}\rangle}}

\newcommand{\norm}[1]{\left\lVert#1\right\rVert}

\newcommand{\suppress}[1]{}

\title{On learning linear functions from subset and its applications in quantum computing}


\author{ 
G\'abor Ivanyos \thanks{ Institute for Computer Science and Control, Hungarian Academy of Sciences, Budapest, Hungary, ({\tt Gabor.Ivanyos@sztaki.mta.hu})} 
\and 
Anupam Prakash \thanks{ CNRS, IRIF, Universit\'e Paris Diderot 75205 Paris, France;  ({\tt anupam@irif.fr})} 
\and 
Miklos Santha \thanks{ CNRS, IRIF, Universit\'e Paris Diderot 75205 Paris, France;  and Centre for Quantum Technologies, National University of Singapore, 
Singapore 117543 ({\tt miklos.santha@gmail.com}).} 
}

\begin{document}

\maketitle

\begin{abstract}
Let $\F_{q}$ be the finite field of size $q$ and let $\ell: \F_{q}^{n} \to \F_{q}$ be a linear function. 
We introduce the {\em Learning From Subset} problem $\LFS(q,n,d)$ 
of learning $\ell$, given samples $u \in \F_{q}^{n}$ from a special distribution
depending on $\ell$: the probability of sampling $u$ is a function of $\ell(u)$ and is non zero for at most $d$ values of $\ell(u)$. 
We provide a randomized algorithm for $\LFS(q,n,d)$ with sample complexity $(n+d)^{O(d)}$ and running time 
polynomial in $\log q$ and $(n+d)^{O(d)}$. Our algorithm generalizes and improves upon previous results 
\cite{Friedl, Ivanyos} that had provided algorithms for $\LFS(q,n,q-1)$ with running time $(n+q)^{O(q)}$. 
We further present applications of our result to the {\em Hidden Multiple Shift} problem $\HMS(q,n,r)$ in quantum computation 
where the goal is to determine the hidden shift $s$ given oracle access to $r$ shifted copies of an injective function $f: \Z_{q}^{n} \to \{0, 1\}^{l}$, 
that is we can make queries of the form $f_{s}(x,h) = f(x-hs)$ where $h$ can assume $r$ possible values. We reduce $\HMS(q,n,r)$ to $\LFS(q,n, q-r+1)$ to obtain a
polynomial time algorithm for $\HMS(q,n,r)$ when $q=n^{O(1)}$ is prime and $q-r=O(1)$. The best known algorithms \cite{CD07, Friedl} for $\HMS(q,n,r)$ with these 
parameters require exponential time. 
 \end{abstract}
 
\section{Introduction} 
\subsection{Learning with noise}
Let $n\geq 1$ and $q > 1$ be integers.
We denote by $\Z_q$ 
the ring of integers modulo $q$, and
by $\F_q$ the finite field on $q$ elements,
when $q$ is some power of a prime number. When $q$ is prime then $\Z_q$ coincides with $\F_q$, and 
we will use the notation $\F_q$.
Let  $\ell: \F_q^n \rightarrow \F_q$ be an 
$n$-variable linear function. The main subject of this paper is to learn $\ell$ given partial information about 
the values $\ell(u)$ for uniformly random samples $u$ from $\F_q^n$. 
In the ideal setting, when we 
have access to the  values $\ell(u)$ for uniformly random samples from $\F_q^n$, the problem is canonical and
perfectly understood: after getting $n$ independent samples, we can determine $\ell$ by Gaussian
elimination in polynomial time.
But when instead of the exact values we receive only some property satisfied by them, the problem can become
much more difficult.

Since an element of $\F_q^n$ can be specified with $n \log q$ bits, we will say that {\em an algorithm 
is in polynomial time}
if it runs in time polynomial in both $n$ and $\log q$. Let $f(n,q)$ be a function of $n$ and $q$, then we say that a function $g(n,q) \in \widetilde{O}(f)$ 
if $g(n,q)\leq f(n,q) \log^{c}(nq)$ for some constant $c$ for sufficiently large $n$ and $q$.  By the {\em sample complexity} of an algorithm we mean the number of samples used by it.

There is a somewhat similar context to the learning model we investigate,
it is the model where the values $\ell(u)$ are perturbed by some random noise.
The first example of such a work is by Blum et al.~\cite{Blum} 
on the {\em Learning Parity with Noise} problem $\LPN(n, \eta)$, where 
$\eta < 1/2$.
Here we have access to tuples $(u,b) \in \F_2^n \times \F_2$, where $u$ is a uniformly random element of $\F_2^n$
and $b = \ell(u) + e$, where $e$ is a random 0--1 variable with $\Pr [ e = 1] = \eta$.
For constant noise rate $0 < \eta <1/2$, the best known algorithm for $\LPN(n, \eta)$ is from~\cite{Blum}. 
It has both sample and time complexity of $2^{O(n/ \log n)}$, and therefore only marginally beats
the trivial exhaustive search algorithm of complexity $2^{O(n)}$.

The {\em Learning With Error} problem $\LWE(q, n, \chi)$ is a generalization by Regev \cite{Regev} 
of $\LPN$ to larger fields.
Here $q$ can be any prime number, and $\chi$ is a probability distribution on $\F_q$.
Similar to $\LPN$, we have access to tuples $(u,b) \in \F_q^n \times \F_q$, where $u$ is a uniformly random element of $\F_q^n$ and $b = \ell(u) + e$, with the random variable $e$ 
having distribution $\chi$.
Under the assumptions that $q$ is bounded by some polynomial function of $n$, and that
$\chi(0) \geq 1/q + 1/p(n)$, for some polynomial $p$, the problem can be solved classically with sample and time complexity
$2^{O(n)}$. The case when $\chi=\Psi_{\alpha}$, the discrete Gaussian distribution of standard deviation
$\alpha q$, is of particular interest for lattice based cryptography. Indeed, one of the main results of~\cite{Regev}
is that for appropriate parameters, solving $\LWE(q, n, \Psi_{\alpha})$ is at least as hard as
quantumly solving several cryptographically important lattice problems in the worst case. In a subsequent
work a classical reduction of some of these lattice problems to LWE was given by Peikert~\cite{Peikert}.

In~\cite{AG} Arora and Ge introduced a more structured noise model for learning linear functions
over $\F_2^n$. In the {\em Learning Parity with Structured Noise} problem $\LPSN(n,m)$ the samples arrive 
in groups of size $m$, that is in one sampling step we receive $(u_1,b_1), \ldots , (u_m,b_m)$,
where $(u_i,b_i) \in \F_2^n \times \F_2$, for $i = 1, \ldots, m$.
Here $u_1, \ldots , u_m$ are independent random elements drawn from $\F_2^n$, and
$b_i = \ell(u_i) +e_i$, where the 
the noise vector $e =(e_1, \ldots , e_m) \in \F_2^m$ must have Hamming weight less than $m/2$.
The chosen noise vector $e$ can depend on the sample $(u_1, \ldots , u_m)$, but the model has an important
restriction (structure) compared to the previous error models. Since the Hamming weight of $e$ is less than $m/2$,
it is guaranteed that in every sampling group the majority of the bits $b_i$ is correct, that is coincides with
$\ell(u_i)$. In fact, the model of Arora and Ge is somewhat more general. 
Let $P$ be any $m$-variable polynomial over $\F_2^m$, for which there exists $a \in \F_2^m$,
such that $a \neq c+c'$ for all $c,c' \in \F_2^m$ satisfying $P(c) = P(c') = 0$. Then the error vector 
can be any $e \in \F_2^m$ satisfying $P(e) = 0$. The main result of~\cite{AG} is that 
$\LPSN(n,m)$ can be solved in time $n^{O(m)}$, implying that the linear function can be learnt in polynomial time
when $m$ is constant. 

\subsection{Learning from subset}
We consider here a different model of learning linear functions where the difficulty doesn't come
from the noisy sampling process, but from the fact that instead of obtaining the actual values 
of the sampled elements, we only receive
some partial information about them.

Such a model was first considered
by Friedl et al.~\cite{Friedl} with the {\em Learning From Disequations}
problem $\LFD(q,n)$ where $q$ is a prime number. Here 
we never get sample elements from the kernel of $\ell$, that is we can only sample
$u$ if $\ell(u) \neq 0$, which explains the name of the problem.
Friedl et al.~\cite{Friedl} consider distributions $p$ which are not necessarily uniform
on their support, in fact they only require that $p(u) = p(v)$ whenever
$\ell(u) = \ell(v)$.

The reason to consider this learning problem in~\cite{Friedl} is that the
{\em Hidden Shift} 
problem $\HT(q,n)$, a paradigmatic problem in quantum computing,
can be reduced in quantum polynomial time to $\LFD(q,n)$.
In $\HT(q,n)$ we have oracle access to two injective functions $f_0$ and $f_1$ over $\F_q^n$ with the 
promise that for some element $s \in \F_q^n$, we have $f_1(x) = f_0(x-s)$, for all $x \in \F_q^n$.
The element $s$ is called the {\em hidden shift}, and the task is to find it. It is proven in~\cite{Friedl} 
that $\LFD(q,n)$ can be solved in time $(n+q)^{O(q)}$. This result implies that there exists a quantum algorithm for
$\HT(q,n)$ of similar complexity. When $q$ is constant, these algorithms are therefore polynomial time.

In a subsequent paper~\cite{Ivanyos} Ivanyos extended the work of~\cite{Friedl} 
to the case when $q$ is a prime power, both for 
$\LFD(q,n)$ and $\HT(q,n)$.
The complexity bounds obtained are very similar to the bounds of~\cite{Friedl}, and therefore his results
imply that $\LFD(q,n)$ can be solved in polynomial time, and that $\HT(q,n)$ 
in quantum polynomial time when $q$ is a prime power of constant size.

Observe that the complexity bound $(n+q)^{O(q)}$ is not only not polynomial in $\log q$, but is not even exponential,
in fact it is doubly exponential. Therefore~\cite{Friedl} and~\cite{Ivanyos} not only leave open the question whether,
in general, it is possible to obtain a polynomial time (quantum) algorithm for  $\LFD(q,n)$ and $\HT(q,n)$,
but also the question of the existence of algorithms which are (only) polynomial in $\log q$ and $n$.
These questions are still open today.

In this work we introduce a generalization of the learning problem $\LFD$. 
While in $\LFD$ 
the sampling distribution had to avoid the kernel of $\ell$, in our model the input contains
a set $A \subseteq \F_q$, and we sample from distributions whose support contains only those elements
$u$, for which $\ell(u) \in A$. As in~\cite{Friedl}, we don't require that the sampling distribution is uniform on 
its support, but that the elements with the same $\ell$-value have identical probabilities. We allow these probabilities to be exponentially 
small and even $0$.

\begin{defn}\label{def:dist}
Let $A \subset \F_q$, where $q$ is a prime power, let
$\ell : \F_q^n \rightarrow \F_q$ be a linear function, and let $p$ be a distribution over $\F_q^n$.
We say that the {\em $\ell$-image} of $p$ is $A$ if $\ell(\supp (p)) = A$.
The distribution 
is {\em $\ell$-symmetric} if $\ell(u) = \ell(v)$ implies $p(u) =p(v)$. 
When the $\ell$-image of $p$ is a subset of $A$ and it is also $\ell$-symmetric, we shortly say that it is
an $(A, \ell )$-{\em distribution}. 
\end{defn}

In other words, $p$ is an $(A, \ell )$-distribution if $p$ is constant on each
affine subspace $V_\alpha=\{u\in \F_q^n:\ell(u)=\alpha\}$, for $\alpha \in \F_q$,
and moreover $p$ is zero on $V_\alpha$, whenever $\alpha \not\in A$. 
It is not hard to see that for $|A| < q,$ if $p$ is simultaneously an $(A, \ell )$-distribution
and an $(A, \ell' )$-distribution
then $\ell '$ is a constant multiple of $\ell$. On the other hand, 
non-zero constant multiples of a linear function can not be distinguished in general in this model: for example,
if $A=\F_q \setminus \{0\}$,
then for every $c \neq 0$, an $(A, \ell )$-distribution is also an $(A, c\ell )$-distribution.

\begin{defn}\label{def:LFS}
The {\em Learning From Subset} problem $\LFS(q,n,d)$ is parametrized by 
three positive integers $q, n$ and $d$, where $q$ is a prime power and $2 \leq d \leq q-1$.

\noindent {\em Input:} 
A set $A \subset \F_q$ of cardinality $d$ and a sequence 
of $N$ samples $u_1,\ldots,u_N$ from an $(A, \ell)$-distribution
for some nonzero linear function $\ell : \F_q^n \rightarrow \F_q$.
\\
{\em Output:} A non zero constant multiple of $\ell$. 
\end{defn}

For $d <d'$, an $\LFS(q,n,d)$ instance is also an $\LFS(q,n,d')$ instance, therefore the problem is harder for bigger $d$. 
For $d=1$ the problem is simple because it becomes a system of linear 
equalities which can be solved by Gaussian elimination.
When $d=q$ we don't receive any information from the samples and it is impossible to identify
the linear function.
When $d = q-1$ and $A=\F_q \setminus \{0\}$, the problem $\LFS$ specializes to $\LFD$,
in fact the latter is the hardest instance of the former.

The first main result of our paper is a randomized algorithm for $\LFS(q,n,d)$ whose complexity
depends exponentially on $d$, but only polynomially on $\log q$. This result shows that
the increase of information by reducing the size of the set $A$ can indeed be algorithmically exploited.
More precisely, we show
that  for a sample size $N$ which is a sufficiently large polynomial
of $n^d$, 
there exists a randomized algorithm which in time polynomial in $n^d$ and $\log q$,
with probability 1/2, determines $\ell$ up to a constant factor.

\begin{theorem} \label{thm:LFSconstant} 
There is a randomized algorithm for \LFS$(q,n,d)$ with sample complexity
$(n+d)^{O(d)}$ and running time polynomial in
$\log q$ and $(n+d)^{O(d)}$. 
\end{theorem}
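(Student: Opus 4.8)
The idea is to reduce $\LFS(q,n,d)$ to a polynomial-system-solving problem in the spirit of the Arora–Ge approach to $\LPSN$, but working over $\F_q$ and exploiting the algebraic structure of the set $A$ rather than a Hamming-weight bound. Write $A=\{a_1,\dots,a_d\}\subset\F_q$ and let $P_A(t)=\prod_{i=1}^d (t-a_i)$ be the (monic, degree $d$) vanishing polynomial of $A$. For the unknown linear function $\ell(u)=\sum_j c_j u_j$, every sample $u$ satisfies $\ell(u)\in A$, hence the polynomial identity $P_A(\ell(u))=0$. Expanding, this is a polynomial equation of degree $d$ in the unknown coefficient vector $c=(c_1,\dots,c_n)$, whose monomials are the products $c_{j_1}\cdots c_{j_k}$ for $k\le d$. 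Linearizing — introducing a fresh variable $y_S$ for each multiset $S$ of size $\le d$ drawn from $\{1,\dots,n\}$, of which there are $\binom{n+d}{d}=n^{O(d)}$ — each sample $u_t$ gives one \emph{linear} constraint on the vector $y=(y_S)$. With $N=n^{O(d)}$ samples we obtain a linear system; the plan is to show that its solution space, intersected with the requirement that the degree-$1$ part $y_{\{j\}}=c_j$ be consistent, pins down $\ell$ up to a scalar. First I would set up this linearization precisely, then argue (i) solvability — the true coefficient vector's monomial-evaluation vector is always a solution, so the system is consistent; and (ii) uniqueness up to scaling — this is the crux.

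**The key steps, in order.** (1) Define $P_A$, derive the degree-$d$ identity $P_A(\ell(u))=0$ valid for every sample, and fix the linearization, reducing to solving a homogeneous-up-to-the-leading-coefficient linear system $M y = b$ where $M$ has $N$ rows and $n^{O(d)}$ columns. (2) Prove a \emph{rank lemma}: for $N$ a sufficiently large polynomial in $n^d$, with probability $\ge 1/2$ over the samples from \emph{any} $(A,\ell)$-distribution, the only solutions $y$ of the linear system that additionally satisfy the ``consistency'' shape (i.e. arise as $y_S=\prod_{j\in S} c'_j$ for some vector $c'$, or more weakly have rank-one quadratic block) are scalar multiples of the true $\ell$'s monomial vector. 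The right way to get this is to show that the \emph{kernel} of the sampling operator — the linear functionals that vanish on all of $\supp(p)$ — is exactly the space spanned by (multiples of) the coordinate functions vanishing because of $P_A(\ell(\cdot))$; this is where one needs that $\supp(p)$ is not contained in any proper "exceptional" variety, which follows because $p$ is $\ell$-symmetric with $\ell$-image exactly filling out $A$ over affine slices $V_{a_i}$, so the support meets enough slices and is "spread out" inside each. One extracts $\ell$ from the solution by reading off the degree-$1$ coordinates (or, if those are ambiguous, by a rank-one factorization of the degree-$\le 2$ sub-block, which over a field is the Gram-type step that identifies $c$ up to scalar). (3) Handle the amplification: run the randomized procedure $O(1)$ times, or take a union over the $O(1)$ independent trials, and verify the output linear function $\ell'$ against a fresh batch of samples (check $P_A(\ell'(u))=0$ on all of them), which filters out wrong candidates with overwhelming probability. (4) Bound the complexity: Gaussian elimination on an $n^{O(d)}\times n^{O(d)}$ matrix over $\F_q$ costs $\poly(n^d,\log q)$; here one must be careful that $q$ is a prime \emph{power} so arithmetic in $\F_q$ is $\polylog(q)$ per operation, but $n^{O(d)}$ never multiplies a factor of $q$ — only $\log q$ — which is exactly the improvement over $(n+q)^{O(q)}$.

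**The main obstacle.** The hard part will be step (2), the rank/uniqueness argument: showing that $N=(n+d)^{O(d)}$ samples from an \emph{arbitrary} $(A,\ell)$-distribution — which may be wildly non-uniform, supported on an exponentially small, adversarially chosen subset of each affine slice $V_{a_i}$, and may even assign zero probability to some slices — still suffice to make the linearized system rank-determining. If the support were all of $\bigcup_{a\in A} V_a$ this would be a clean statement about the ideal generated by $P_A(\ell)$ in the polynomial ring (a Combinatorial Nullstellensatz / Schwartz–Zippel style count over $\F_q$). For general support one needs a structural input: the $\ell$-symmetry forces the support, restricted to each populated slice $V_a$, to be a \emph{full} slice in the only sense that matters — namely that a polynomial of degree $\le d$ vanishing on it must vanish on all of $V_a$, \emph{provided} we also know it has the special form coming from the other constraints — and this is delicate when $d$ is comparable to $q$. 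I expect the proof to route around this by working not with the raw support but with the induced distribution on the $d$ values $\ell(u)\in A$ together with a random-subspace restriction trick (restrict $\ell$ to a random $O(d)$-dimensional coordinate subspace, where $n$ is effectively $O(d)$ and brute-force-style counting becomes affordable, then lift), and by invoking $\ell$-symmetry to transfer the restriction back. Getting the probability of success to a constant, rather than something decaying in $n$ or $q$, is the quantitative heart of the matter.
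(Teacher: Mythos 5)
Your step (1) — linearize by treating the $\binom{n+d}{d}$ coefficients of a degree-$\le d$ polynomial as unknowns and imposing one linear constraint per sample — is exactly the paper's starting point (your system in the $y_S$ is the same matrix up to a column rescaling). The gap is in step (2), and it is not a technical detail you could "route around": the statement you need there is false for general $(A,\ell)$-distributions. The definition only requires the $\ell$-image of $p$ to be a \emph{subset} of $A$, and the slice probabilities $\Pr[\ell(u)=a]$ may be zero or exponentially small; so with $(n+d)^{O(d)}$ samples you may only ever see slices $V_a$ for $a$ in some proper subset $A'\subsetneq A$. Then $\prod_{a\in A'}(\ell(x)-a)\cdot m(x)$ vanishes on every sample for \emph{every} linear form $m$, so the kernel of your linear system has dimension at least $n+1$ and contains vectors that are not of the form $P_A(\ell'(\cdot))$ for any linear $\ell'$. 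Hence "the kernel is spanned by the monomial vector of $P_A(\ell)$" cannot be proved, and reading off the degree-one (or degree-two) block of an arbitrary kernel element gives garbage. Your fallback — restrict to "consistency-shaped" solutions $y_S=\prod_{j\in S}c'_j$ — replaces the linear system by a system of quadratic constraints, and you give no method to solve it; Arora--Ge escape this only because in their setting one coefficient of $\ell'$ is pinned to $-1$, which is not available here. (Also, a populated slice is always \emph{fully and uniformly} supported by $\ell$-symmetry; the adversarial freedom is entirely in the probabilities assigned to the $d$ slices, not inside them.)

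The paper resolves this with two ideas absent from your proposal. First, it does not aim for uniqueness of the kernel at all: it proves (Lemma~\ref{lem:poly-shrink}) that with probability $1/2$ \emph{every} degree-$\le d$ polynomial vanishing on all samples is divisible by the single linear factor $\ell(x)-\alpha_p$, where $\alpha_p$ is the \emph{most frequent} value of $\ell(u)$, which necessarily has probability $\ge 1/d$. The proof is a dimension-decrementing argument on the nested solution spaces $P_0\supseteq P_1\supseteq\cdots$: if the current space contains a polynomial not divisible by $\ell(x)-\alpha_p$, then by a Schwartz--Zippel variant on affine slices (Lemma~\ref{lem:poly-nz}) that polynomial survives a fresh sample with probability at most $1-\frac{1}{d(d+1)}$, so after $O\bigl(\binom{n+d}{d}d^2\log\binom{n+d}{d}\bigr)$ samples the space is either trapped inside the good set or has dropped dimension $\binom{n+d}{d}$ times, which is impossible. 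Second, $\ell$ is extracted from an \emph{arbitrary} nonzero kernel element by running Kaltofen's multivariate polynomial factorization, locating a linear factor $f$, and testing the $d$ candidate scalars $\gamma$ with $\gamma(f(u_i)-f(0))\in A$; uniqueness up to a scalar then follows from unique factorization. Without the "most frequent value" pivot and the factorization step, your plan does not close.
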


The main interest of this result is that for constant $d$ it gives
a polynomial time algorithm for $\LFS$. For $d = q-1$ and $A=\F_q \setminus \{0\}$ it yields
the same complexity bound as~\cite{Friedl} and~\cite{Ivanyos}. But observe, that even for non constant 
$d = o(q)$, it is asymptotically faster than the algorithms in the above papers.

\subsection{Hidden multiple shifts} 
As we have already said, the original motivation for~\cite{Friedl} to study $\LFD$ was its connection to the hidden
shift problem. This problem was implicitly introduced by Ettinger and H{\o}yer~\cite{EH00}, while
studying the hidden subgroup problem in the dihedral group. The hidden shift problem can be defined in any 
group $G$. We are given two injective functions $f_0$ and $f_1$ mapping $G$ to some arbitrary finite set.
We are promised that for some element $s \in G$, we have $f_1(xs) = f_0(x)$, for every $x \in G$, and the
task is to find $s$. As shown in~\cite{EH00}, when $G$ is abelian, 
the hidden shift in $G$ is quantum polynomial time equivalent to the
hidden subgroup problem in the semidirect product $G \rtimes \Z_2$. 
In the semidirect product the group operation is defined as
$(x_1, b_1).(x_2, b_2) = (x_1 + (-1)^{b_1} x_2, b_1 + b_2)$, and the function $f(x, b) = f_b(x)$ hides the
the subgroup $\{(0, 0), (s, 1)\}.$ The quantum complexity of $\HT$ in the cyclic group $\Z_q$ (or equivalently,
the complexity of the hidden subgroup in the dihedral group $\Z_q \rtimes \Z_2$) is a famous open problem in quantum computing. In~\cite{EH00} there is a quantum algorithm for this problem of polynomial quantum
sampling complexity, but followed by an exponential time 
classical post-processing. The currently best known quantum 
algorithm is due to Kuperberg~\cite{Kup05}, and it is of subexponential complexity
$2^{O(\sqrt{\log q})}$. Note that one could also consider shifts of
non-injective functions. The extension of $\HT$ to such cases can become
quite difficult even over $\Z_2^n$ where $\HT$ for injective functions
is identical to the hidden subgroup problem. 
Results in this direction can be found e.g. in 
\cite{GavinskyRR11}, \cite{ChildsKOR13} and \cite{Roetteler16}.

As one could expect, the polynomial time algorithm for $\LFS$ with constant $d$ 
has further consequences for quantum
computing. Indeed, using this learning algorithm, we can solve in quantum polynomial time some instances
of the hidden multiple shifts problem, a generalization 
of the hidden shift problem, which we define now.

For an element $s\in \Z_q^n$, a subset $H\subseteq \Z_q$ of cardinality at least 2, and a function
$f:\Z_q^n\rightarrow \{0,1\}^l$, where $l$ is an arbitrary positive integer, 
we define the function $f_s:\Z_q^n\times H\mapsto \{0,1\}^l$ as
$f_s(x,h)=f(x-hs).$
We think about $f_s(x,h)$ as the $h$th {\em shift} of $f$ by $s$.
The task in the hidden multiple shift problem is to recover $s$ when we are given oracle access,
for some $f$ and $H$, to $f_s$. This problem doesn't necessarily have a unique solution.
Indeed, let us define $\delta(H,q)$ as the
largest divisor of $q$ such that $h-h'$ is
divisible by $\delta(H,q)$ for every $h,h'\in H$. Pick $h_0\in H$.
Then for any $s'\in \frac{q}{\delta(H,q)}\Z_q^n$
 and $h\in H$, we have
$hs'=h_0s'+(h-h_0)s'=h_0s'$
whence $h(s+s')=hs+h_0s'$ and therefore
$$f_{s+s'}(v,h)=f(v-h(s+s'))=f(v-h_0s'-hs)=f'_s(v,h),$$
where $f'(v)=f(v-h_0s')$. This means that 
$s$ and $s+s'$ are indistinguishable by the set of shifts of $f$,  
and therefore we can only
hope to determine (the coordinates of) $s$ modulo $\frac{q}{\delta(H,q)}$.
When $q$ is a prime number, this problem of course doesn't arise.

\begin{defn}\label{def:HMT}
The {\em Hidden Multiple Shift} problem $\HMS(q,n,r)$ parametrized by
three positive integers $q , n$ and $r$, where  $q >1$ and $2 \leq r \leq q-1$. 

\noindent {\em Input:} A set $H\subseteq \Z_q$ of cardinality $r$.
\\
{\em Oracle input:} 
A function $f_s:\Z_q^n\times H\rightarrow \{0,1\}^l$, where $s\in \Z_q^n$ and 
$f:\Z_q^n \rightarrow \{0,1\}^l$ is an injective function.
\\
{\em Output:} $s \mod  \frac{q}{\delta(H,q)}$. 

\end{defn}

The $\HMS$ problem was first considered by 
Childs and van~Dam~\cite{CD07}.
They investigated the cyclic case $n=1$ and assumed that $H$ is a contiguous
interval and presented a polynomial time quantum algorithm for such an $H$
of size $q^{\Omega(1)}$. Their result could probably be extended to 
constant $n$. However, for ``medium-size" $n$ and $q$, such a result
seems to be very difficult to achieve. Obtaining an efficient algorithm for medium sized $n,q$ is also stated as an open problem \cite{DIKQS14}, 
and it is noted that such a result would greatly simplify their algorithm. 
 Intuitively, for small $H$ the $\HMS$
appears to be ``too close" to the $\HT$ for which the so 
far best result is still what is given in \cite{Friedl}.

For $r=q$, the $\HMS$ problem can be solved in quantum polynomial time.
Indeed, in that case $H=\Z_q$, and $\Z_q^n\times H=\Z_q^{n+1}$ is an abelian group.
The function $f_s$ hides the subgroup generated by $(s,1)$, therefore we have an instance of the
abelian hidden subgroup problem. When $r=1$ the problem is void, there is no hidden shift.
When $r=2$, we have the standard hidden shift problem for which~\cite{Friedl} and~\cite{Ivanyos}
gave a quantum algorithm of complexity $(n+q)^{O(q)} = (n+q)^{O(q+1-r)}$. Their method at a high level is 
a quantum reduction to (several instances of) $\LFS(q,n,q-1)$. These extreme cases suggest a strong connection
between the classical complexity of $\LFS(q,n,d)$ and the quantum complexity of $\HMS(q,n,r)$ when $r = q+1-d$.
Indeed, this turns out to be true. In our second main result we give a polynomial time quantum Turing reduction of 
$\HMS(q,n,r)$ to $\LFS(q,n,q+1-r)$, to obtain an algorithm of complexity $(n+q)^{O((q-r)^{2})}$ for the former problem.

\begin{theorem}  
\label{thm:small-prime}
Let $q$ be a prime. 
Then there is a quantum algorithm
which solves $\HMS(q,n,r)$
with sample complexity and in time $(n+q)^{O((q-r)^{2})}$.
\end{theorem}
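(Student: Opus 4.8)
The plan is to reduce $\HMS(q,n,r)$ to a single instance of $\LFS(q,n,q-r+1)$ by a polynomial-time quantum Turing reduction, and then apply Theorem~\ref{thm:LFSconstant} with $d=q-r+1$. First dispose of the trivial cases: since $q$ is prime and $|H|=r\ge 2$ we have $\delta(H,q)=1$, so we must output $s$ exactly; and $s=0$ is detectable with two queries (it holds iff $f_s(\cdot,h)$ does not depend on $h$), so assume henceforth $s\neq 0$, which makes the linear function $\ell_s:\F_q^n\to\F_q$, $\ell_s(y)=s\cdot y$, nonzero.

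\emph{Coset states.} With one oracle call prepare $\tfrac{1}{\sqrt{q^nr}}\sum_{x\in\Z_q^n}\sum_{h\in H}\ket{x}\ket{h}\ket{f(x-hs)}$ and measure the last register; since $f$ is injective (and $s\neq0$) the state collapses to $\tfrac{1}{\sqrt r}\sum_{h\in H}\ket{w+hs}\ket{h}$ with $w\in\Z_q^n$ uniform. Apply the quantum Fourier transform over $\Z_q^n$ to the first register and measure it: the outcome $y$ is uniform in $\Z_q^n$, and the second register is left in the unit state $\ket{\psi_t}:=\tfrac{1}{\sqrt r}\sum_{h\in H}\omega^{ht}\ket{h}$ with $\omega=e^{2\pi i/q}$ and $t=s\cdot y=\ell_s(y)$. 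Hence each oracle call yields a pair $(y,\ket{\psi_{s\cdot y}})$ with $y$ uniform, independent across calls.

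\emph{Filtering into an $\LFS$ instance.} A direct computation gives the tight-frame identity $\sum_{t\in\Z_q}\ket{\psi_t}\bra{\psi_t}=\tfrac{q}{r}I$ on $\C^H$, so the $\ket{\psi_t}$ span this $r$-dimensional space; therefore one can greedily find $r-1$ values $t_1,\dots,t_{r-1}\in\Z_q$ for which $\ket{\psi_{t_1}},\dots,\ket{\psi_{t_{r-1}}}$ are linearly independent, all arithmetic taking place in the cyclotomic field $\Q(\zeta_q)$. Set $A:=\Z_q\setminus\{t_1,\dots,t_{r-1}\}$, so $|A|=q-r+1=:d$, and let $\ket{\phi}$ be the unit vector (unique up to phase) spanning the one-dimensional orthogonal complement of $\ket{\psi_{t_1}},\dots,\ket{\psi_{t_{r-1}}}$; approximate it to sufficiently many bits. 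Now apply the two-outcome measurement $\{\ket{\phi}\bra{\phi},\,I-\ket{\phi}\bra{\phi}\}$ to the second register and keep $y$ only if the first outcome occurs. The keep-probability of a given sample is $|\braket{\phi}{\psi_{s\cdot y}}|^2$, which is $0$ whenever $s\cdot y\notin A$; since this probability, like the law of $y$, depends on $y$ only through $\ell_s(y)$, the kept samples are i.i.d.\ draws from an $(A,\ell_s)$-distribution in the sense of Definition~\ref{def:dist}, with $2\le d\le q-1$. Moreover $\prob[\text{keep}]=\tfrac{1}{q^n}\sum_{y}|\braket{\phi}{\psi_{s\cdot y}}|^2=\tfrac{1}{q}\bra{\phi}\big(\sum_t\ket{\psi_t}\bra{\psi_t}\big)\ket{\phi}=\tfrac{1}{r}$, using that every $t\in\Z_q$ has exactly $q^{n-1}$ preimages under $\ell_s$.

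\emph{Solving and de-scaling.} Feed the set $A$ and $N$ kept samples to the algorithm of Theorem~\ref{thm:LFSconstant} with $d=q-r+1$; for $N=(n+q)^{O(q-r)}$ it returns, with probability $\ge 1/2$, a nonzero multiple $c\,\ell_s$ of $\ell_s$, i.e.\ the vector $cs$ for an unknown $c\in\F_q^\times$. Producing $N$ kept samples costs $O(rN)$ oracle calls in expectation, and the post-processing takes time $\poly(\log q)\cdot(n+q)^{O(q-r)}$. Finally test the $q-1$ candidates $c'^{-1}(cs)$, $c'\in\F_q^\times$: given a candidate $s'$, pick $h_1\neq h_2\in H$ and any $x$ and query $f_s(x+h_1s',h_1)$ and $f_s(x+h_2s',h_2)$; both equal $f(x)$ when $s'=s$, and they differ when $s'\neq s$ because $f$ is injective and $(h_1-h_2)(s'-s)\neq 0$; output the unique candidate that passes. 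Since all overheads are polynomial in $q$, the total sample and running-time cost is $(n+q)^{O(q-r)}$, comfortably within the bound asserted in the statement; boosting the $\LFS$ success probability is a constant-factor repetition.

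\emph{Main obstacle.} The only nonroutine part is the filtering step: one must verify the tight-frame identity and deduce the spanning/linear-independence statements about the ``generalized Vandermonde'' vectors $\ket{\psi_t}$ over $\Q(\zeta_q)$ for an arbitrary given $H$, and then control the bit-precision of the implementation of $\ket{\phi}\bra{\phi}$ so that the (nonzero but exponentially small) keep-probability leaking from values $t\notin A$ never produces a corrupting sample among the $(n+q)^{O(q-r)}$ drawn — ensuring that the input actually handed to the $\LFS$ algorithm is a genuine $(A,\ell_s)$-distribution. If one replaces the single rank-one filter by a coarser scheme that combines several $\LFS$ instances (as in the hidden-shift reductions of \cite{Friedl,Ivanyos}), the sample and time bounds weaken but remain of the stated form $(n+q)^{O((q-r)^2)}$.
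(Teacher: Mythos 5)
Your reduction is the same one the paper uses in Section~\ref{sec:HMT}: prepare the pairs $(y,\mu_{(y,s)})$, filter with the rank-one projector onto the orthogonal complement of $r-1$ of the states $\mu_t$ so that the retained $y$'s form an $(A,\ell_s)$-distribution with $|A|=q-r+1$, feed them to Algorithm~\ref{alg:LFS}, and then resolve the unknown scalar. The two places where you genuinely diverge are both to your advantage. First, the paper's quantitative core is the Vandermonde determinant estimate of Lemma~\ref{lem:Vandermonde-bound}, which lower-bounds the retention probability by $\Omega\left(q^{-c^2-1}(q/r)^r\right)$ with $c=q-r$ and is the sole source of the $(q-r)^2$ in the exponent of the theorem; you replace it by the tight-frame identity $\sum_{t\in\F_q}\ket{\mu_t}\bra{\mu_t}=(q/r)I$, which gives the retention probability exactly as $1/r$ (using that $\ell_s$ takes every value equally often when $s\neq 0$) and therefore yields the stronger bound $(n+q)^{O(q-r)}$ --- strictly better than what the theorem claims. (The Vandermonde computation remains the cleanest way to see that any $r-1$ of the $\mu_t$ are linearly independent, so your greedy selection always succeeds, indeed on the first $r-1$ values tried.) Second, for de-scaling the paper uses the one-sided-error quantum test of Lemma~\ref{lem:is_fixed_mu} over $O(q)$ further samples, whereas your classical two-query check per candidate is deterministic and simpler, and your separate two-query detection of $s=0$ covers a degenerate case the paper does not discuss. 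The precision issue you flag is real but is the same one the paper faces in Claim~\ref{claim:measure-implement}; by Fact~\ref{fact:approx_unit} the gate cost is only polylogarithmic in $1/\epsilon$, so the projector can be implemented to error $(n+q)^{-C(q-r)}$ for any constant $C$, which makes the total variation distance between the retained sample sequence and a genuine $(A,\ell_s)$-sample sequence negligible over all $N$ draws.
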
 

The above Theorem yields a polynomial time algorithm for $\HMS(q,n,r)$ for the case when $q-r$ is constant and $q=n^{O(1)}$. 
We also present a Fourier sampling based algorithm for $\HMS$ which is polynomial time for a different set of parameters satisfying $\frac{r}{q}= 1 - \Omega(\frac{\log n}{n})$. We have the following result.
\begin{theorem} \label{thm:large}
There is a quantum algorithm that solves $\HMS(q,n,r)$ with 
high probability in time $O(\poly(n)(\frac{q}{r}) ^{n+O(1)})$.
\end{theorem}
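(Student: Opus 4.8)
The plan is to run a noisy version of abelian Fourier sampling over the group $G=\Z_q^n\times\Z_q=\Z_q^{n+1}$. First I prepare $\frac{1}{\sqrt{rq^{n}}}\sum_{x\in\Z_q^{n}}\sum_{h\in H}\ket{x}\ket{h}$ (a uniform superposition over $H$ is producible in time $\poly(n,\log q)$ when $H$ is given succinctly, e.g.\ as an interval), query the oracle into an ancilla to obtain $\frac{1}{\sqrt{rq^{n}}}\sum_{x,h}\ket{x}\ket{h}\ket{f_s(x,h)}$, and measure the ancilla. By injectivity of $f$ the residual state is $\ket{\Phi}=\frac{1}{\sqrt r}\sum_{h\in H}\ket{y_0+hs}\ket{h}$ for some $y_0\in\Z_q^n$. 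Applying $\qft$ over $\Z_q^n$ to the first register and over $\Z_q$ to the second (the $\poly(\log q)$-time inverse-polynomial-precision approximate $\qft$ over $\Z_q$ suffices, with negligible error) and measuring, I obtain $(k,t)\in\Z_q^n\times\Z_q$ with
\[
\prob[(k,t)]=\frac{1}{rq^{n+1}}\bigl|S_H(k\cdot s+t)\bigr|^2,\qquad S_H(u):=\sum_{h\in H}\omega_q^{hu}.
\]
In particular $k$ is uniform on $\Z_q^n$, and $(k,t)$ is a candidate linear constraint ``$k\cdot s+t\equiv0$'' for $s$.

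The analytic point is to identify when this constraint is exact. By the equality case of the triangle inequality, $|S_H(u)|=r$ iff $\omega_q^{(h-h')u}=1$ for all $h,h'\in H$, i.e.\ iff $q/\gcd(q,u)$ divides every difference $h-h'$; by the very definition of $\delta=\delta(H,q)$ this is equivalent to $\tfrac{q}{\delta}\mid u$, i.e.\ to $u$ lying in the order-$\delta$ subgroup $\tfrac q\delta\Z_q$. Writing $m=q/\delta$ and summing the above probabilities over the $(k,t)$ with $|S_H(k\cdot s+t)|=r$ (for each $k$ there are exactly $\delta$ such $t$), a sample is \emph{good}, meaning $k\cdot s+t\equiv0\pmod m$, with probability exactly $\tfrac{1}{rq^{n+1}}\cdot q^n\cdot\delta\cdot r^2=\tfrac{\delta r}{q}\ge\tfrac rq$, independently across repetitions, and $k$ stays uniform on $\Z_q^n$ conditioned on goodness. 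Since $s$ and every $s+\tfrac q\delta\Z_q^n$ yield the same $\ket\Phi$, only $\sigma:=s\bmod m$ is ever recoverable (matching the problem statement), and a good sample is precisely a correct constraint on $\sigma$.

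With this in hand I would recover $\sigma$ by brute force over which samples are good. Compute $m=q/\delta(H,q)$ from the (succinct) input. Repeat for $T=\widetilde O\bigl((q/(\delta r))^{\,n+O(1)}\bigr)$ rounds: draw $n+c$ fresh samples $(k_1,t_1),\dots,(k_{n+c},t_{n+c})$ (an absolute constant $c$); if the system $k_i\cdot s'\equiv-t_i\pmod m$ is consistent, solve it over the ring $\Z_m$ via Smith/Hermite normal form in time $\poly(n,\log q)$ to get a candidate $s'$; verify $s'$; output it if verification succeeds. If all $n+c$ samples of a round are good then $\sigma$ solves the system, and it is the unique solution as soon as $k_1,\dots,k_{n+c}$ generate $\Z_m^n$; since the $k_i$ are uniform, for $c$ a large enough constant this happens with probability at least $\tfrac12$ (a union bound over the primes dividing $m$). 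Hence each round outputs $\sigma$ with probability at least $\tfrac12(\delta r/q)^{n+c}$, so $T=\widetilde O\bigl((q/(\delta r))^{n+O(1)}\bigr)\le\poly(n,\log q)\,(q/r)^{n+O(1)}$ rounds give overall success probability $1-o(1)$; with $\poly(n,\log q)$ work per round, the total running time is $O(\poly(n,\log q)\,(q/r)^{n+O(1)})$, matching the claimed bound.

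I expect the verification step to be the main obstacle. A correct candidate de-shifts the oracle to $g(x,h):=f_s(x+hs',h)=f\bigl(x+h(s'-s)\bigr)$, which is independent of $h$, so verification amounts to testing whether $s'-s\in m\Z_q^n$; this is essentially query-free when $H$ contains two consecutive elements (e.g.\ an interval, in which case moreover $\delta=1$), where it reduces to the single injectivity check whether $g(x_0,h_0)=g(x_0,h_0+1)$, but for a general $H$ a naive value-comparison test can cost $\Omega(r)$ queries, which is too expensive when $r$ is large, so a more careful (Fourier-analytic or interference-based) verification subroutine on the $h$-register is needed; this is the part requiring the most care. The remaining ingredients---the exact evaluation of $\prob[(k,t)]$ and its link to $\delta(H,q)$, the $\poly(\log q)$ approximate $\qft$ over a composite $\Z_q$, linear algebra over $\Z_m$, and the generation probability of $n+O(1)$ uniform vectors---are routine, and it is their combined effect together with the verification that accounts for the ``$+O(1)$'' in the exponent.
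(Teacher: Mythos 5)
Your core algorithm is the same as the paper's: Fourier-sample to obtain, per oracle call, a linear constraint on $s$ that is exactly correct with probability about $r/q$, collect $n+O(1)$ such constraints so that the coefficient vectors generate, solve the linear system, and repeat the whole round $\poly(n)(q/r)^{n+O(1)}$ times. In fact your analysis is sharper than the paper's in two respects: by applying the $\qft$ to both registers and computing the joint distribution of $(k,t)$ exactly, you get the per-sample success probability $\delta r/q$ and a clean treatment of composite $q$ by working modulo $m=q/\delta(H,q)$, whereas the paper measures $u$ first, views $\phi_s^u$ as an approximation of $\kappa_s^u$ with overlap $\sqrt{r/q}$, and does not address the composite case or the indeterminacy of $s$ at this point; your generating-set estimate over $\Z_m^n$ via a union bound over primes replaces the paper's appeal to \cite{Pomerance}. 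All of the computations you carry out ($k$ uniform, the characterization of $|S_H(u)|=r$ via $\delta$, the probability $\delta r/q$ of a good sample, uniformity of $k$ conditioned on goodness) are correct.

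The one genuine divergence is the final identification step, and it is exactly where your proposal is incomplete, as you yourself flag. You propose to \emph{verify} each candidate $s'$ through the oracle, and you correctly observe that the natural test (is $h\mapsto f(x_0+h(s'-s))$ constant on $H$?) costs $\Omega(r)$ queries for a general $H$, which blows the time budget when $r$ is superpolynomial --- precisely the regime of Corollary~\ref{thm:large-cor}, where $r/q=1-\Omega(\log n/n)$ and $q$ may be exponential in $n$. The paper avoids verification entirely: it runs the rounds and outputs the \emph{most frequently occurring} candidate. This closes your gap because a round outputs the true $\sigma$ with probability $\Omega((\delta r/q)^{n+O(1)})$, while (for prime $q$, say) a fixed wrong candidate $\sigma''$ is consistent with a single sample with probability $\frac{1}{rq^{n+1}}\sum_k|S_H(k\cdot(s-\sigma''))|^2=1/q$, hence survives a whole round with probability at most $q^{-(n+O(1))}$, smaller by a factor $r^{n+O(1)}$; a plurality over $\poly(n)(q/r)^{n+O(1)}$ rounds therefore isolates $\sigma$ with high probability and needs no further oracle access. (The paper states this plurality step in a single sentence without the supporting calculation, so your writeup is at least as detailed as the source everywhere except here.) Replacing your verification subroutine by this plurality argument completes your proof and yields the claimed bound.
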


\subsection{Hidden subgroup problems as instances of $\HMS$}

Certain hidden subgroup problems arise as instances of $\HMS$.
As we have mentioned already, in the
extreme case of $r =q$, we have $\Z_q^n\times \Z_q=\Z_q^{n+1}$
and the function $f_s$ hides the subgroup generated by $(s,1)$.
Thus in this case $\HMS$ is a hidden subgroup problem in
the commutative group $\Z_q^n$. In fact, when $H$ is in a certain
rather strict sense very close to the whole of $\Z_q$, our algorithm
for solving $\HMS(q,n,|H|)$ with input $H$ is (or at least can be considered as)
a version of the standard commutative hidden subgroup algorithm
in $\Z_q^n$. (Note however, that even if only one element of $\Z_q$
is missing from $H$, this method fails as the relative distance
$\frac{1}{q}$ is not small enough if $q$ is much smaller than $n$.)

Another type of the hidden subgroup problems are in certain
generalizations of the affine group considered by Moore et 
al.~\cite{MRRS04}.
Let $\Z_q^*$ denote the multiplicative group of the units of the {\em ring}
$\Z_q$ (that is, the residue classes of integers coprime to $q$).
Let $H$ be a subgroup of $\Z_q^*$. The elements of $H$ act as 
automorphisms of the additive group $\Z_q^n$ by simultaneous
multiplication.
We consider the semidirect product $G=\Z_q^n\rtimes H$ with respect to
this action. Then $G=\Z_q^n\times H$ as a set and the multiplication
in $G$ is given by $(v_1,h_1)*(v_2,h_2)=(h_2v_1+v_2,h_1h_2)$. For $s\in \Z_q^n$
we define
$$K_s=\{(hs-s,h):h\in H\}.$$
Then the $K_s$ are subgroups of $G_H$, in fact these are the
conjugates of the subgroup $K_0=\{(0,h):h\in H\}$. The right cosets
of $K_s$ are the sets $\{(hs+v-s,h):h\in H\}$ where $v$ runs over $\Z_q^n$. 
By taking $w=v-s$, we see that the family of these cosets is the same as
the family $\{(hs+w,h)\}$ where $w$ runs over $\Z_q^n$. Observe that
these are the level sets of a function $f_s$ where $f$ is any injective
function defined on $\Z_q^n$. Thus for multiplicative
subgroup $H\leq \Z_q^*$, the problem
$\HMS(q,n,H)$ is equivalent to the hidden subgroup problem
in $G$ where the hidden subgroup is promised to be a member
of the family $\{K_s:s\in \Z_q^n\}$. If $n$ is even then
$s$ is only determined modulo $\frac{q}{2}$, that is, the function $f_s$
and $f_{s+\frac{q}{2}s'}$ for any injective function $f$ on $\Z_q^n$
have the same level sets for any injective function $f$ on $\Z_q^n$
for any vector $s'\in \Z_q^n$. However, the hidden subgroups
$K_s$ and $K_{s+\frac{q}{2}s'}$ also coincide:
\begin{eqnarray*}
K_{s+\frac{q}{2}s'}&=&\{(h-1)(s+\frac{q}{2}s'),h):h\in H\}=
\{(h-1)s+(h-1)\frac{q}{2}s',h):h\in H\}
\\
&=&
\{(h-1)s,h):h\in H\}=K_s
\end{eqnarray*}
because $(h-1)\frac{q}{2}=0$ for every $h\in H\subseteq \Z_q^*$.

\subsection{Our proof methods} 
The basic idea of the proof of Theorem~\ref{thm:LFSconstant} is
a variant of linearization used in \cite{Friedl} and in \cite{AG},
presented in the flavor of \cite{Ivanyos}. To give a high level 
description, observe that 
every $u$ such that $p(u)\neq 0$ 
is a zero of the polynomial $f_{(A,\ell)}(x)=\prod_{a\in A}(\ell(x)-a).$
By Hilbert's Nullstellensatz, over the algebraic closure of $\F_q$,
the polynomials which vanish on all the zeros of $f_{(A,\ell)}$ are
multiples of $f_{(A,\ell)}$. In particular, every such polynomial
which is also 
of degree at most $d$ must be a scalar multiple of 
$f_{(A,\ell)}$. Interestingly, one could show that this consequence
remains true with high probability, if we replace ``all the zeros" by sufficiently 
many random samples provided that our $(A,\ell)$-distribution is uniform 
(or nearly uniform) in the sense that $p(u)$ (the probability of sampling $u$) is the same (or almost the same) 
for every $u$ such that $\ell(u)\in A$, independently on the actual value of
$\ell(u)$. Therefore, in the (nearly) uniform case one could 
compute a nontrivial scalar multiple of
$f_{(A,\ell)}$ by finding a nontrivial solution of a system
of $N$ homogeneous linear equations in $(n+d)^d$ unknowns (these
are the coefficients of the various monomials in $f_{(A,\ell)}$).
Then $\ell$ could be determined by factoring this polynomial.
This method would be a direct generalization of the algorithms given
in~\cite{Friedl} and~\cite{Ivanyos}. Indeed, in those papers one could
just take $A=\F_q\setminus \{0\}$. However, the proofs (and in case 
of~\cite{Ivanyos} even the algorithmic ingredients) are designed specially
for small $q$ and straightforward extensions would result in algorithms of 
complexity depending exponentially not only on $d$ but on
 $\log q$ as well. Here we give an algorithm that depends polynomially 
on $\log q$ and that works without
any assumption on uniformity. (In the case $A=\F_q\setminus \{0\}$
uniformity can actually be simulated by multiplying the sample
vectors by random nonzero scalars.) Then, instead of divisibility
by $f_{(A,\ell)}$ we prove that, with high probability, the polynomials
that are zero on sufficiently many samples are divisible by
$\ell(x)-a$ for the ``most frequent" value $a\in A$. Then we 
find a scalar multiple of $\ell$ by factoring a nonzero polynomial
from the space of those which are zeros on all the samples. 

The subexponential $\LWE$-algorithm of Arora and Ge \cite{AG} is based
on implicitly solving a problem that can be cast as an instance
of $\LFS$ where one of the coeffecients of the linear function
$\ell$ is known, $0\in A$, and the $(A,\ell)$ distribution is 
such that $0$ is the most likely value.
The problem implicitly used by Arora and Ge in their $\LWE$-algorithm
is the following. Let $A$ be a subset of $\F_q$ of size $d$ containing 
$0$, let $p$ be a probablity distribution on $A$ with $p(0)=\Omega(1/d)$. 
We have access to an oracle that produces pairs $(u_i,b_i)$ such that
$u_i$ are uniformly random vectors from $\F_q^n$ and 
$b_i=\ell(u_i)+e_i$ where $\ell$ is a linear function on $\F_q^n$ and
$e_i$ are chosen from $A$ according to the distribution $p$. The task is 
to determine $\ell$. This problem can be cast as an instance of the
$n+1$-dimensional $\LFS$ as follows. Let $\ell'(x_1,\ldots,x_n,x_{n+1})=
\ell(x_1,\ldots,x_n)-x_{n+1}$ and for $u_i$ let $u_i'$ be the
vector of lenght $n+1$ obtained from $u_i$ by appending $b_i$ as 
the last coordinate. Then $u_i'$ follow an $(A,\ell')$-distribution
so we indeed obtain an instance of $\LFS$. Arora and Ge then prove
a theorem that could be translated into the context of polynomials
as follows. If we have sufficiently many samples $u_i'$ with
$\ell'(u_i')=0$, then, with high probablity, every $n+1$-variable 
polynomial of degree at most $d=|A|$ in which the coefficient 
of $x_{n+1}$ is $-1$ and which vanish on all the $u_i'$s, must
have linear part $\ell'$. (By the linear part of a polynomial
we mean the sum of its monomials of degree one.) It turns out that
that the linear part of every polynomial divisible by the homogeneous linear
polynomial $\ell'$ is a scalar multiple of $\ell'$, so this theorem
would follow easily from our divisibity result.

The algorithm for solving $\HMS(q,n,r)$ in  Theorem~\ref{thm:small-prime} is based on the following. 
After applying some standard preprocessing, we obtain samples of
states that are projections to an $r$-dimensional space
of $\text{QFT}(\ket{(u,s)})$ where $\text{QFT}$ denotes the quantum Fourier transform on $\Z_{q}^{n}$, 
the vector $u \in \Z_{q}^{n}$ is sampled from the uniform distribution on $\Z_{q}^{n}$ and $(\cdot,\cdot)$ denotes 
the standard scalar product of $\Z_q^n$. If we are able to determine the 
scalar product $(u,s)$ for $n$ linearly independent $u$ using the projected states, then $s$ 
can also be computed using Gaussian elimination. However when 
$q$ is not large enough compared to $n$ then the error probability for 
computing $(u,s)$ is too large and we get a system of noisy linear equations 
for which no efficient algorithms are known. Instead, we
can devise a measurement, that at the cost of sacrificing 
a $1-1/q^{O(1)}$ 
fraction of the samples, yields samples $u$ such that
$(u,s)$ belongs to a small subset of $\Z_q$ for sure. More precisely,
the samples follow an $(A,\ell)$ distribution where $A$ is of size
$q-r+1$ and $\ell=(s,\cdot)$. Then we apply 
Theorem~\ref{thm:LFSconstant} and some easy other steps to determine
$s$. So $\LFS$ turns out to be a generalization of the learning
problem $\LFD$ considered in~\cite{Friedl} and~\cite{Ivanyos} 
with applications to quantum computing.

\textbf{Paper organization:} In section \ref{sec:prelim} we state some useful facts and lemmas that are used later for proofs. In section \ref{sec:findlinfct}, we provide the algorithm for $\LFS(q,n,d)$ and prove Theorem \ref{thm:LFSconstant}. In section \ref{sec:large} we propose a Fourier sampling based algorithm for $\HMS(q,n,r)$ and prove Theorem \ref{thm:large}. Finally, in section \ref{sec:HMT} we reduce $\HMS(q,n,r)$ to $\LFS(q,n,q-r+1)$ and prove Theorem \ref{thm:small-prime}.

\section{Preliminary facts}  \label{sec:prelim} 
\subsection{Algebraic preliminaries} 

We collect here some 
well known facts and lemmas 
that will be useful for proofs in later sections. We recall first the well known 
Hadamard inequality and the expression for the determinant of a Vandermonde matrix.

\begin{fact} \label{fact:hadamard} 
\cite{H93} Let $M \in \C^{n\times n}$ be a matrix with column vectors $v_{i} \in \C^{n}$, then $|det(M)| \leq \prod_{i \in [n]}  \norm{ v_{i}} $. 
\end{fact}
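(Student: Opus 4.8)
The plan is to prove the inequality by reducing $\det(M)$ to a product of the diagonal entries of the triangular factor in a $QR$-decomposition, and bounding those against the column norms. If the columns $v_1,\dots,v_n$ are linearly dependent then $\det(M)=0$ and the bound holds trivially, so assume they are independent. Running Gram–Schmidt on $v_1,\dots,v_n$ produces an orthonormal basis $q_1,\dots,q_n$ of $\C^n$ together with coefficients $r_{ji}$ such that $v_i=\sum_{j\le i}r_{ji}q_j$ with $r_{ii}>0$; equivalently $M=QR$ with $Q$ unitary (its columns are the $q_j$) and $R=(r_{ji})$ upper triangular. Since $|\det Q|=1$, this gives $|\det(M)|=|\det(R)|=\prod_{i=1}^n r_{ii}$.

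The next step is to bound each $r_{ii}$. By construction $v_i$ decomposes orthogonally as $v_i=\bigl(\sum_{j<i}r_{ji}q_j\bigr)+r_{ii}q_i$, where the first summand lies in $\mathrm{span}(q_1,\dots,q_{i-1})$ and is therefore orthogonal to $q_i$. The Pythagorean identity for the complex inner product then yields $\norm{v_i}^2=\bigl\|\sum_{j<i}r_{ji}q_j\bigr\|^2+r_{ii}^2\ge r_{ii}^2$, hence $r_{ii}\le\norm{v_i}$. Multiplying over $i$ gives $|\det(M)|=\prod_{i=1}^n r_{ii}\le\prod_{i=1}^n\norm{v_i}$, which is the claim.

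An alternative route that avoids explicitly invoking Gram–Schmidt is to note $|\det(M)|^2=\det(M^*M)$ and that $B:=M^*M$ is positive semidefinite with $B_{ii}=\norm{v_i}^2$, so it suffices to show $\det(B)\le\prod_i B_{ii}$ for PSD $B$. If some $B_{ii}=0$ then $B$ is singular; otherwise put $D=\mathrm{diag}(B_{11},\dots,B_{nn})$ and $C=D^{-1/2}BD^{-1/2}$, a PSD matrix with all diagonal entries $1$ and thus $\tr(C)=n$, and AM–GM on the nonnegative eigenvalues of $C$ gives $\det(C)\le(\tr(C)/n)^n=1$, whence $\det(B)=\det(D)\det(C)\le\prod_i B_{ii}$.

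This is a classical inequality, so I do not expect a genuine obstacle; the only points requiring a little care are the degenerate cases (dependent columns, or a zero column, where the determinant vanishes and the bound is immediate) and making sure the orthogonal decomposition step is invoked over $\C$, where $\norm{a+b}^2=\norm{a}^2+\norm{b}^2$ still holds whenever $a\perp b$. Since the statement is quoted from \cite{H93}, in the final write-up I would simply cite it rather than reproduce either argument in full.
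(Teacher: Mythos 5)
Your proof is correct. Note, however, that the paper does not prove this statement at all: it is stated as a Fact and attributed to the reference \cite{H93}, so there is no in-paper argument to compare against. Both of your routes are standard and sound. The Gram--Schmidt/QR argument is complete: the degenerate case is handled, $|\det M|=\prod_i r_{ii}$ follows from unitarity of $Q$, and the orthogonal decomposition $v_i=\bigl(\sum_{j<i}r_{ji}q_j\bigr)+r_{ii}q_i$ correctly gives $r_{ii}\le\norm{v_i}$ via the Pythagorean identity, which indeed holds over $\C$ for orthogonal summands. The second route via $\det(M^*M)\le\prod_i (M^*M)_{ii}$ using AM--GM on the eigenvalues of the diagonally rescaled matrix is also valid and is arguably the cleaner way to handle all degeneracies at once. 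Your closing remark is the right call: since the paper quotes this as a known fact, citing \cite{H93} rather than reproducing a proof matches the paper's own treatment.
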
 
\begin{fact}
\label{fact:Vandermonde-det}
Let $x_1,\ldots,x_m$ be variables. Then the following identity
holds in the polynomial ring $\Z[x_1,\ldots,x_m]$.
\begin{equation*} 
\det\begin{pmatrix}
1 & x_1 & x_1^2 & \cdots & x_1^{m-1} \\
1 & x_2 & x_1^2  &\cdots & x_2^{m-1} \\
\vdots & \vdots &\vdots &\ddots & \vdots \\
1 & x_m & x_m^2  &\cdots & x_m^{m-1} 
\end{pmatrix}=\prod_{1\leq j<i\leq m}(x_i-x_j).
\end{equation*} 
\end{fact}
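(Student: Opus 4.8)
The plan is to prove the identity by induction on $m$, viewing both sides as elements of the polynomial ring $\Z[x_1,\ldots,x_m]$. The base case $m=1$ is immediate: the matrix is the $1\times 1$ matrix $(1)$, its determinant is $1$, and the right hand side is the empty product, which is also $1$.

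For the inductive step I would fix $x_1,\ldots,x_{m-1}$ and treat the last variable $x_m$ as a formal indeterminate $t$, so that the determinant becomes a polynomial $p(t)\in\Z[x_1,\ldots,x_{m-1}][t]$. Cofactor expansion along the last row shows that $p$ has degree at most $m-1$ in $t$ and that the coefficient of $t^{m-1}$ is exactly the $(m-1)\times(m-1)$ Vandermonde determinant in $x_1,\ldots,x_{m-1}$, which by the induction hypothesis equals $\prod_{1\le j<i\le m-1}(x_i-x_j)$. On the other hand, substituting $t=x_k$ for any $k\le m-1$ yields a matrix with two equal rows, hence $p(x_k)=0$; having exhibited $m-1$ distinct roots of the degree $\le m-1$ polynomial $p$, we get $p(t)=\big(\prod_{1\le j<i\le m-1}(x_i-x_j)\big)\prod_{k=1}^{m-1}(t-x_k)$. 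Evaluating at $t=x_m$ and regrouping the product then gives $\prod_{1\le j<i\le m}(x_i-x_j)$, completing the induction; since every step is an identity of polynomials with integer coefficients, it holds in $\Z[x_1,\ldots,x_m]$ and in particular persists under any ring homomorphism, e.g. specialization of the $x_i$ to elements of $\F_q$.

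Two small points need care. First, to pass from ``$p$ has $m-1$ roots'' to the displayed factorization one should work in $\F[t]$ where $\F=\Q(x_1,\ldots,x_{m-1})$ is the field of fractions: there $t-x_k$ are pairwise coprime monic factors, so their product divides $p$, the quotient is a constant in $t$, and matching leading coefficients identifies that constant with the smaller Vandermonde determinant. Second, one should verify the cofactor sign: the $(m,m)$ cofactor carries sign $(-1)^{m+m}=+1$, so no spurious sign enters. The only mildly delicate step is this factorization/leading-coefficient bookkeeping; everything else is routine. An alternative self-contained route is the global one: $D:=\det(\cdot)$ is homogeneous of degree $\binom{m}{2}$, each $x_i-x_j$ divides $D$ since equal rows make $D$ vanish on $x_i=x_j$, the $x_i-x_j$ are pairwise non-associate irreducibles in the UFD $\Z[x_1,\ldots,x_m]$ so their product divides $D$, and comparing the coefficient of $\prod_i x_i^{i-1}$ (which is $+1$ on both sides) forces the remaining constant to be $1$.
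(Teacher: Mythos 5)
Your proof is correct. The paper states this Vandermonde identity as a classical fact without giving any proof, so there is nothing to compare against: both your inductive argument (factoring the degree-$(m-1)$ polynomial $p(t)$ over the fraction field $\Q(x_1,\ldots,x_{m-1})$ and matching the leading coefficient with the $(m-1)\times(m-1)$ Vandermonde) and your alternative global argument (divisibility of the determinant by each irreducible $x_i-x_j$ in the UFD $\Z[x_1,\ldots,x_m]$, followed by a degree count and comparison of the coefficient of $\prod_i x_i^{i-1}$) are standard, complete, and valid over $\Z[x_1,\ldots,x_m]$, hence under any specialization such as to $\F_q$ or to the roots of unity $\omega^{h_j}$ used later in the paper.
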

We next state the Schwartz-Zippel lemma \cite{Z79, S80} and then prove a variant that is required for the proof of Theorem \ref{thm:LFSconstant}. 
\begin{fact} \label{fact:szippel} 
\cite{Z79, S80}  Let $g\in \F_{q}[x_{1}, x_{2}, \cdots, x_{n}]$ be a non zero $n$-variate polynomial of degree $d$ over a finite field $\F_{q}$ for $n,d\geq 1$. Let $S$ be a subset of $\F_{q}$
and let $u$ be sampled from the uniform distribution on $S^{n}$, then,  
$$\Pr_{{\underline a} \sim S^{n}} [ g(u_{1}, u_{2}, \cdots, u_{n}) = 0 ] \leq \frac{d}{ |S|}.$$ 
\end{fact}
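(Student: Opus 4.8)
The plan is to prove Fact~\ref{fact:szippel} by the classical Schwartz--Zippel induction on the number of variables $n$, using only the elementary fact that a nonzero univariate polynomial of degree at most $d$ over a field has at most $d$ roots. Throughout I will treat $d$ as an upper bound on $\deg g$ (which is what makes the induction close and is consistent with the statement).

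First I would dispose of the base case $n=1$: if $g\in\F_q[x_1]$ is nonzero of degree at most $d$, then it has at most $d$ roots in $\F_q$, hence at most $d$ roots in $S\subseteq\F_q$, so a uniformly random $u\in S$ is a root with probability at most $d/|S|$. For the inductive step ($n\ge 2$) I would regard $g$ as a polynomial in $x_n$ with coefficients in $\F_q[x_1,\dots,x_{n-1}]$ and write
\[
g(x_1,\dots,x_n)=\sum_{i=0}^{k}x_n^{\,i}\,g_i(x_1,\dots,x_{n-1}),
\]
choosing $k\le d$ maximal with $g_k\not\equiv 0$; then $g_k$ is a nonzero $(n-1)$-variable polynomial of degree at most $d-k$. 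Drawing $u=(u_1,\dots,u_n)$ uniformly from $S^n$ and setting $u'=(u_1,\dots,u_{n-1})$, I would condition on whether $g_k(u')=0$. If $k<d$ the induction hypothesis gives $\Pr[g_k(u')=0]\le (d-k)/|S|$ (and if $k=d$ then $g_k$ is a nonzero constant, so this probability is $0$); on the complementary event $g_k(u')\ne 0$ the specialization $g(u',x_n)\in\F_q[x_n]$ is a nonzero univariate polynomial of degree exactly $k$, so by the base case, and using that $u_n$ is independent of $u'$ and uniform on $S$, it vanishes at $u_n$ with conditional probability at most $k/|S|$. The law of total probability then gives
\[
\Pr_{u\sim S^n}[g(u)=0]\;\le\;\frac{d-k}{|S|}+\frac{k}{|S|}\;=\;\frac{d}{|S|},
\]
which is the claim.

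There is no genuinely hard step here; the only points that need care are the degree bookkeeping --- one must check that $g_k$ has degree at most $d-k$, which holds because every monomial $x_n^{k}m$ occurring in $g$ has total degree at most $d$ and hence $\deg m\le d-k$ --- and the legitimacy of the conditioning in the inductive step, which is immediate since the coordinates of $u$ are sampled independently. As this is a standard lemma I would present only this short induction (or simply cite it); the substantive work of the section is the variant of this lemma stated afterwards, adapted to the non-uniform sampling relevant to $\LFS$, which is what actually feeds into the proof of Theorem~\ref{thm:LFSconstant}.
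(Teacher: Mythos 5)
Your induction is the standard and correct proof of the Schwartz--Zippel lemma; the degree bookkeeping ($\deg g_k\le d-k$) and the conditioning on $g_k(u')\ne 0$ are both handled properly, and the edge cases $k=0$ and $k=d$ cause no trouble. The paper itself gives no proof of this statement --- it is quoted as a known fact with citations to Zippel and Schwartz --- so there is nothing to diverge from; as you note, the substantive content of that section is the subsequent variant (Lemma~\ref{lem:poly-nz}) for sampling from an affine hyperplane.
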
 
\noindent We require a variant of the Schwartz-Zippel lemma where the polynomial $g(x)$ is not divisible by 
a linear function $\ell(x)$ and the samples are drawn from an affine
 subspace $V_\alpha=\{u\in \Z_q^n:\ell(u)=\alpha\}$
for a fixed $\alpha \in \F_{q}$. 
\begin{lemma}
\label{lem:poly-nz}
Let $g(x_1\ldots,x_n)$,
be a degree $d$ polynomial in $\F_q[x_1,\ldots,x_n]$ that is not divisible
by $\ell(x_1,\ldots, x_n)-\alpha$ where $\alpha\in \F_q$ and $\ell(x_1,\ldots,x_n)$ is a nonzero
homogeneous linear polynomial. Let 
$u=(\beta_1,\ldots,\beta_n)$ be sampled uniformly at random
from the affine subspace $V_\alpha=\{u\in \Z_q^n:\ell(u)=\alpha\}$, then 
$$
\Pr_{u \sim V_{\alpha}}  [ g(u) =  0 ] \leq \frac{d}{q}. 
$$
\end{lemma}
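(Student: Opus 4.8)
The plan is to reduce the statement to the ordinary Schwartz--Zippel lemma (Fact~\ref{fact:szippel}) by parametrizing the affine subspace $V_\alpha$ by $n-1$ free variables. Since $\ell$ is a nonzero homogeneous linear polynomial, after permuting the coordinates we may assume $\ell(x)=c_1x_1+\cdots+c_nx_n$ with $c_1\neq 0$. Then $V_\alpha$ is the image of the bijection $\F_q^{n-1}\to V_\alpha$ sending $(t_2,\ldots,t_n)$ to the point whose $j$-th coordinate is $t_j$ for $j\geq 2$ and whose first coordinate is $c_1^{-1}(\alpha-c_2t_2-\cdots-c_nt_n)$. Sampling $u$ uniformly from $V_\alpha$ is therefore the same as sampling $(t_2,\ldots,t_n)$ uniformly from $\F_q^{n-1}$ and feeding it through this map. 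Substituting this parametrization into $g$ produces a polynomial $h\in\F_q[t_2,\ldots,t_n]$ with $\Pr_{u\sim V_\alpha}[g(u)=0]=\Pr_{t\sim\F_q^{n-1}}[h(t)=0]$; since each $x_j$ has been replaced by a polynomial of degree at most $1$ in the $t_i$, we have $\deg h\leq\deg g=d$.

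The crux is to argue that $h$ is not the zero polynomial, and this is exactly where the hypothesis "$g$ is not divisible by $\ell-\alpha$" enters. One must be careful here, because over a finite field a polynomial can vanish on all of $V_\alpha$ \emph{as a set} without being divisible by $\ell-\alpha$, so it is vanishing as a \emph{formal} polynomial that matters. I would make the invertible affine change of variables $y_1=\ell(x)-\alpha$, $y_j=x_j$ for $j\geq 2$, writing $g(x)=G(y_1,\ldots,y_n)$ for a suitable $G\in\F_q[y_1,\ldots,y_n]$. Because this substitution is invertible, $\ell-\alpha$ divides $g$ if and only if $y_1$ divides $G$, i.e.\ if and only if $G(0,y_2,\ldots,y_n)$ is identically zero; and unwinding the definitions shows $h(t_2,\ldots,t_n)=G(0,t_2,\ldots,t_n)$. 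Hence the assumption $\ell-\alpha\nmid g$ gives precisely $h\neq 0$.

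Finally, if $h$ is a nonzero constant (in particular when $n=1$) then $\Pr[h(t)=0]=0$ and we are done. Otherwise $h$ is a nonzero polynomial in $n-1\geq 1$ variables of degree $\deg h\leq d$, and applying Fact~\ref{fact:szippel} with $S=\F_q$ yields $\Pr_{t\sim\F_q^{n-1}}[h(t)=0]\leq \deg h/q\leq d/q$, which is the claimed bound. The only genuinely delicate point in the whole argument is the equivalence "$h=0$ as a polynomial $\iff$ $\ell-\alpha\mid g$"; I expect that to be the main thing to get right, while everything else is bookkeeping about degrees under the change of coordinates.
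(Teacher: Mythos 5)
Your proof is correct and follows essentially the same route as the paper: both make a linear change of coordinates turning $\ell-\alpha$ into a single coordinate, identify the restriction of $g$ to $V_\alpha$ with a polynomial in $n-1$ variables that is nonzero precisely when $\ell-\alpha\nmid g$ (the paper extracts this as the remainder upon division by $y_1-\alpha$, which equals your $G(0,y_2,\ldots,y_n)$), and then apply Fact~\ref{fact:szippel}.
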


\begin{proof}
Let $v_1,\ldots,v_n$
be $n$ linearly independent vectors from $\F_q^n$ such that
$\ell(v_1)=1$ and $\ell(v_i)=0$ for $i=2,\ldots,n$. If $y= \sum_{i \in [n]} v_{i} x_{i}$ 
then we have that $\ell(y)=x_{1}$. Thus, in the basis spanned by the $v_{i}$s, the vector $u=(\alpha, \beta_2,\ldots,\beta_n)^t$ where 
$(\beta_2, \beta_3, \cdots, \beta_n)$ is a uniformly random vector 
from $\F_q^{n-1}$ is a sample from the uniform distribution on $V_\alpha$.

Let $h(y_1,\ldots,y_n)$ be be the quotient and
$r(y_1,y_2,\ldots,y_n)$ be the remainder when $g(y_1,\ldots y_n)$
is divided by the polynomial $y_1-\alpha$
in the univariate polynomial ring $\F_q(y_2,\ldots,y_n)[y_1]$, that is 
$$g(y_1,\ldots,y_n)=(y_1-\alpha)h(y_1,\ldots,y_n)+r(y_2,\ldots,y_n).$$

\noindent Clearly, $r(y_1,\ldots,y_n)$ has degree
zero in $y_1$ and is a polynomial of degree at most 
$d$ in the variables $y_{2}, y_{3}, \cdots, y_{n}$. 


Substituting the uniformly random vector $u=(\alpha, \beta_2,\ldots,\beta_n)^t$ for $y$ in the above relation, the term
$(u_1-\alpha)h(u_1,\ldots,u_n)=0$ while, by the Schwartz-Zippel Lemma 
(Fact~\ref{fact:szippel}), we have
that $r(\beta_2,\ldots,\beta_n)=0$ with probability at
most $d/q$ if $r$ is a non zero polynomial. Thus, for polynomial $g$ that does not have $\ell(x_1,\ldots, x_n)-\alpha$
as a factor (so that $r(y_2,\ldots,y_n) \neq 0$ in above equation), we have that $\Pr_{u \sim V_{\alpha}} [ g(u)=0 ] \leq d/q$. 
\end{proof}

\section{An algorithm for LFS}
\label{sec:findlinfct}

\suppress{

\begin{defn}\label{def:LFS}
The {\em Learning from Subset} problem $\LFS$ is parametrized by a 
a finite field $\F_q$ and three 
positive integers $n,d$ and $N$, where $q > d \geq 1$. 
\\~\\
\vbox{\begin{quote}
\LFS$(\F_q,n,d)$
\\
{\em Input:} 
A set $A \subset \F_q$ of cardinality $d$ and a sequence 
of $N$ elements $u_1,\ldots,u_N$ from $\F_q^n$.
\\
{\em Output:} 
A nonzero linear function $\ell : \F_q^n \rightarrow \F_q$ such that
$\ell(u_i)\in A$, for every $i=1,\ldots,N$.
\end{quote}}
\end{defn}

In general different linear functions can satisfy the output condition of $\LFS$. 
However, here we restrict ourselves to inputs where $u_1,\ldots,u_N$ are chosen 
randomly from some specific distribution over $\F_q^n$. 

\begin{defn}\label{def:dist}
Let $A \subset \F_q$, and let
$\ell : \F_q^n \rightarrow \F_q$ be a linear function. We say that a distribution $p$ over $\F_q^n$
is an $(A, \ell )$-{\em distribution} if 
\begin{itemize}
\item
$\ell(u) = \ell(v) ~~ \Longrightarrow ~~ p(u) =p(v)$, and
\item
$\ell(u) \not \in A ~~  \Longrightarrow   ~~p(u) = 0.$
\end{itemize}
\end{defn}

In other words, $p$ is an $(A, \ell )$-distribution if $p$ is constant on each
affine subspace $V_\alpha=\{u\in \Z_q^n:\ell(u)=\alpha\}$, for $\alpha \in \F_q$,

and moreover $p$ is zero on $V_\alpha$, whenever $\alpha \not\in A$.
It is not hard to see that for $|A| < q,$ if $p$ is simultaneously an $(A, \ell )$-distribution
and an $(A, \ell' )$-distribution
then $\ell '$ is a constant multiple of $\ell$.
The main result of this paper that for $d$ constant, for sample size $N$ which is a sufficiently large polynomial
of $n$, if the inputs are drawn from an $(A, \ell )$-distribution, then 
there exists a randomized algorithm which in time polynomial in $n$ and $\log q$,
with probability 1/2 determines $\ell$ up to a constant factor.

}

In this section we prove Theorem~\ref{thm:LFSconstant}. 
Let $p$ be an $(A, \ell)$-distribution on $\F_q^n$, where $|A| = d$. 
We define $\alpha_p$ as the element $\alpha \in A$ for which 
$\Pr[ \ell(u) = \alpha]$ is maximal (breaking a tie arbitrarily).
We start the proof with our main technical lemma 
which links $p$ to the  space of $n$-variable polynomials of degree $d$.

\begin{lemma}
\label{lem:poly-shrink}
Let 
$N=\Omega\left(\binom{n+d}{d} {d^2\log\binom{n+d}{d}}
\right)$
and let $u_1,\ldots,u_N$ be  sampled independently  from  
an $(A,\ell)$-distribution on $\F_q^n$, where $|A| = d<q$. 
Then with probability at least $1/2$, every
polynomial $g(x_1,\ldots, x_n)$ 
over $\F_q^n$ of degree at most $d$, for which $g(u_i)=0$ for $i=1,\ldots,N$,
is divisible by $\ell(x_1,\ldots, x_n)-\alpha_p$.
\end{lemma}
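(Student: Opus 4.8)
The plan is to reduce the statement to a single clean anti‑concentration estimate for one fixed polynomial and then amplify it over the whole sample set by an iterative rank‑growth argument, so that no union bound over the (exponentially many) degree‑$\le d$ polynomials is needed. Throughout, let $W$ denote the $\F_q$-vector space of polynomials in $x_1,\dots,x_n$ of degree at most $d$, so $\dim W=\binom{n+d}{d}$; let $U=\{g\in W:(\ell-\alpha_p)\mid g\}=(\ell-\alpha_p)\cdot\F_q[x_1,\dots,x_n]_{\le d-1}$; and for $0\le i\le N$ let $\mathrm{ev}_i\colon W\to\F_q^i$, $g\mapsto(g(u_1),\dots,g(u_i))$, be the evaluation map and $W_i=\ker(\mathrm{ev}_i)$. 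The claim is then precisely that $\Pr[W_N\subseteq U]\ge 1/2$.

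The heart of the argument is the estimate that for every $g\in W$ with $(\ell-\alpha_p)\nmid g$,
$$\Pr_{u\sim p}\bigl[g(u)\neq 0\bigr]\ \ge\ \frac{1}{d(d+1)}\ \ge\ \frac{1}{2d^2}.$$
To prove it I would first peel off from $g$ all linear factors of the form $\ell-\alpha$ ($\alpha\in\F_q$): write $g=\bigl(\prod_{\alpha\in S}(\ell-\alpha)^{m_\alpha}\bigr)\,h$ with $S\subseteq\F_q$, each $m_\alpha\ge 1$, and $h$ divisible by no $\ell-\gamma$ ($\gamma\in\F_q$); since $(\ell-\alpha_p)\nmid g$ we have $\alpha_p\notin S$. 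Now condition on the event $\ell(u)=\alpha_p$, which has probability at least $1/d$ because $\alpha_p$ maximises $\Pr[\ell(u)=\alpha]$ over the at most $d$ values in $\ell(\supp(p))\subseteq A$. On this event $g(u)=c\cdot h(u)$ with $c=\prod_{\alpha\in S}(\alpha_p-\alpha)^{m_\alpha}\neq 0$, and, by $\ell$-symmetry of $p$, $u$ is uniformly distributed on $V_{\alpha_p}$. Since $\deg h\le d$ and $\ell-\alpha_p$ does not divide $h$, Lemma~\ref{lem:poly-nz} (the case of a constant $h$ being trivial) gives $\Pr_{u\sim V_{\alpha_p}}[h(u)=0]\le d/q$, whence $\Pr[g(u)\neq 0]\ge \tfrac1d\bigl(1-\tfrac dq\bigr)=\tfrac1d\cdot\tfrac{q-d}{q}\ge\tfrac1d\cdot\tfrac1{d+1}$, using $q\ge d+1$.

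Given this, I would run the rank‑growth argument. Set $\tau=\min\{i\ge 0:W_i\subseteq U\}\in\{0,1,2,\dots\}\cup\{\infty\}$. Whenever $i<\tau$ we have $W_i\not\subseteq U$, so we may fix some $g\in W_i\setminus U$; the estimate above yields $\Pr[g(u_{i+1})\neq 0\mid u_1,\dots,u_i]\ge\tfrac1{2d^2}$, and the event $g(u_{i+1})\neq 0$ forces $g\in W_i\setminus W_{i+1}$, hence $W_{i+1}\subsetneq W_i$ and $\mathrm{rank}(\mathrm{ev}_{i+1})>\mathrm{rank}(\mathrm{ev}_i)$. Since $i\mapsto\mathrm{rank}(\mathrm{ev}_i)$ is nondecreasing, bounded by $\dim W$, and reaching $\dim W$ forces $W_i=\{0\}\subseteq U$, the event $\{\tau>N\}$ implies that fewer than $\dim W$ rank increases occur among the first $N$ steps, each of which --- being a step with index $<\tau$ --- has conditional success probability at least $\tfrac1{2d^2}$. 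A routine coupling with i.i.d.\ $\mathrm{Bernoulli}(\tfrac1{2d^2})$ variables, followed by a Chernoff bound, then gives $\Pr[\tau>N]\le\Pr[\mathrm{Bin}(N,\tfrac1{2d^2})<\dim W]\le 1/2$ once $N=\Omega\bigl(d^2\binom{n+d}{d}\bigr)$, comfortably within the stated hypothesis on $N$; on the complementary event $W_N\subseteq W_\tau\subseteq U$, which is the assertion.

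The step I expect to be the real obstacle is the anti‑concentration estimate, and in particular obtaining a lower bound on $\Pr[g(u)\neq 0]$ that is $\mathrm{poly}(1/d)$ yet independent of $q$: naively summing $\Pr_{u\sim V_\alpha}[h(u)\neq 0]\ge 1-d/q$ over the fibres $V_\alpha$ with $\alpha\notin S$ yields nothing when $q$ is as small as $d+1$. The resolution is to use only the single most frequent fibre $V_{\alpha_p}$, where the mass $\ge 1/d$ exactly offsets the weak anticoncentration $1-d/q\ge 1/(d+1)$, and to have peeled off the $\ell-\alpha$ factors beforehand so that Lemma~\ref{lem:poly-nz} applies to $h$ rather than to $g$ (which may vanish identically on the very fibre we exploit). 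The remaining ingredients --- the reformulation via kernels of evaluation maps and their ranks, and the elementary concentration bound --- are routine.
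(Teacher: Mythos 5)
Your proof is correct, and its core coincides with the paper's: the per-sample estimate $\Pr[g(u)\neq 0]\ge \tfrac{1}{d(d+1)}$ for $g$ not divisible by $\ell-\alpha_p$ is obtained exactly as in the paper, by conditioning on the heaviest fibre $V_{\alpha_p}$ (mass $\pi\ge 1/d$) and applying Lemma~\ref{lem:poly-nz} there, giving $\Pr[g(u)=0]\le(1-\pi)+\pi d/q\le 1-\tfrac1{d(d+1)}$. Two remarks. First, your preliminary peeling of the factors $\ell-\alpha$ is redundant: Lemma~\ref{lem:poly-nz} applies directly to $g$ with $\alpha=\alpha_p$, since its only hypothesis is $(\ell-\alpha_p)\nmid g$, which is exactly the assumption $g\notin U$; the worry that $g$ might vanish identically on the fibre $V_{\alpha_p}$ is precisely what that lemma rules out (its remainder $r$ is a nonzero polynomial of degree at most $d<q$), and the paper indeed invokes it on $g$ itself. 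Second, your amplification step genuinely differs from the paper's and is in fact sharper. The paper partitions the samples into $\binom{n+d}{d}+1$ blocks of length $k=\Theta\bigl(d^2\log\binom{n+d}{d}\bigr)$, shows that each block either yields $P\subseteq P'$ or strictly drops the dimension except with probability $\tfrac12\binom{n+d}{d}^{-1}$, and takes a union bound over the blocks --- this is where the $\log\binom{n+d}{d}$ in $N$ comes from. Your stopping-time argument with stochastic domination by a $\mathrm{Bin}\bigl(N,\tfrac1{2d^2}\bigr)$ variable counts rank increases globally and needs only $N=\Omega\bigl(d^2\binom{n+d}{d}\bigr)$, saving the logarithmic factor; since the stated $N$ is larger, the lemma follows a fortiori. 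Both routes are valid: yours gives a marginally better sample complexity, while the paper's uses nothing beyond the independence of the samples within a block.
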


\begin{proof}
For $j=0,\ldots,N$ we set $P_j$ to be the set of polynomials in
$\F_q[x_1,\ldots,x_n]$ of degree at most $d$ which take zero 
value on the first $j$ samples: 
$$P_j =\{g(x_1,\ldots,x_n):\deg g\leq d \mbox{~and~}g(u_i)=0 
\mbox{~for~}i=1,\ldots,j\}.$$
In particular, $P_0$ is the set of all polynomials
of degree at most $d$. 
We consider $P_0$ as a vector space of dimension $\binom{n+d}{d}$
over $\F_q$. 
Since, for $u\in \F_q^n$, the map $g\mapsto g(u)$ is linear
on $\F_q[x_1,\ldots,x_n]$, we conclude that $P_0,\ldots,P_N$
is a non-increasing sequence of subspaces of $P_0$. 

Set $\pi=\Pr[ \ell(u) = \alpha_p ]$,
and observe that $\pi \geq \frac{1}{d}$.
Let $P'$ be the set of polynomials from $P_0$
which are divisible by $\ell(x_1,\ldots,x_n)-\alpha_p$.
Then an equivalent way to state the lemma 
is that 
$P_{N} \subset P'$, with probability at least $1/2$.

We first claim that, for every $j=1,\ldots,N$, 
\begin{equation}
\label{conditional}
\Pr[P_{j}=P_{j-1} | P_{j-1}\not \subseteq P'] \leq 1- \frac{1}{d(d+1)}.  
\end{equation}

In order to prove this bound, we note that the condition $P_{j-1}\not \subseteq P'$ means that
there exists a non zero $g \in P_{j-1} \setminus P'$. Fix such a $g$. 
The event 
$P_j=P_{j-1}$ is equivalent to
$f(u_j)=0$, for all $f\in P_{j-1}$. 
Therefore
\begin{align*}
\Pr[P_{j}=P_{j-1} | P_{j-1}\not \subseteq P']  &  \leq \Pr[ \forall f \in P_{j-1}, ~ f(u_j) = 0] \\
&  \leq \Pr [ g(u_j) = 0].
\end{align*}
The probability that $g(u_j)=0$ can be bounded as follows:
\begin{align*} 
\Pr [ g(u_{j})=0 ] &\leq \Pr [ g(u_{j})=0 | \ell(u_j)\neq \alpha_p] \cdot (1-\pi) + \Pr [ g(u_{j})=0 | \ell(u_j)= \alpha_p]
\cdot \pi \\
& \leq (1-\pi ) + \pi \frac{d}{q}. 
\end{align*} 
The first inequality follows simply by decomposing the event $g(u_j) = 0$ according to whether 
$\ell (u_j)$ is different from, or equal to $\alpha_p$.
In the second case, which happens with probability $\pi$, Lemma~\ref{lem:poly-nz} is applicable
and it states that $g(u_{j})=0$ with probability at most $d/q$. This explains the second inequality.
Using $\pi \geq 1/d$ and $q>d$, 
a simple calculation 
gives 
$$
1-\pi+\pi\frac{d}{q} \leq 1-\frac{1}{d(d+1)},
$$
from which the inequality~(\ref{conditional}) follows.


We can use the conditional probability in (\ref{conditional}) to
upper bound the probability of the event that
$P_{j-1}\not\subseteq P'$ and 
$P_{j}=P_{j-1}$ hold simultaneously. But if $P_j=P_{j-1}$
then $P_{j-1}\subseteq P'$ is equivalent to
$P_j\subseteq P'$, therefore we can infer, for every $j=1,\ldots,N$,
\begin{equation*}
\label{conjunction}
\Pr[P_j\not\subseteq P'\mbox{~and~}P_j=P_{j-1}] \leq 1-\frac{1}{d(d+1)}.
\end{equation*}

Iterating the above argument $k$-times, we obtain,
for every $k \leq N$ and $j \leq N-k+1$, 
\begin{equation}
\label{iterated}
\Pr[P_{j+k -1} \not\subseteq P'\mbox{~and~}P_{j+k-1}=P_{j-1}] 
\leq \left (1-\frac{1}{d(d+1)}\right )^k.
\end{equation}

Indeed, as before, we can bound the probability on the left hand side by the conditional probability
$\Pr[ P_{j+k-1}=P_{j-1} | P_{j+k -1} \not\subseteq P'].$ Under the condition $P_{j+k -1} \not\subseteq P'$,
there exists a non zero $g \in P_{j+k-1} \setminus P'$, and we fix such a $g$. Then
\begin{align*}
\Pr[P_{j+k -1} \not\subseteq P'\mbox{~and~}P_{j+k-1} ]
& \leq \Pr[g(u_{j+i}) = 0, \mbox{~for~} i = 0, \ldots, k-1] \\
& \leq \prod_{i=0}^{k-1}  \Pr[g(u_{j+i}) = 0] \\
& \leq (1-\frac{1}{d(d+1)})^k,
\end{align*}
where for the second inequality we used that the samples $u_{j+i}$ are independent. 

Taking $k=\Omega(d^2\log\binom{n+d}{d})$,
$N = (\binom{n+d}{d} +1)k$ and $j = mk +1$, for $m = 0, 1, \ldots , \binom{n+d}{d} $, in
inequality (\ref{iterated}), we get 

\begin{equation*}
\label{bigstep}
\Pr[P_{(m+1)k} \not\subseteq P'\mbox{~and~}P_{(m+1)k}=P_{mk}] 
\leq \frac{1}{2{\binom{n+d}{d}}^{-1}}.
\end{equation*}

For the complement of the union of these $\binom{n+d}{d} +1$ events, we derive then 
\begin{equation*}
\Pr[\bigcap_{m=0}^{\binom{n+d}{d} } 
\Big( P_{(m+1)k} \subseteq P'\mbox{~or~}P_{(m+1)k} \subset P_{mk}
\Big)] \geq \frac{1}{2}.
\end{equation*}
If $P_{(m+1)k} \subset P_{mk}$ for some $m$, then $\dim (P_{(m+1)k} ) < \dim (P_{mk})$.
We can not have simultaneously $\dim (P_{(m+1)k} ) < \dim (P_{mk})$, for 
$m = 0, 1, \ldots , \binom{n+d}{d} $,  because
otherwise $\dim (P_N)$ would be negative. Therefore,
with probability at least 1/2,  $P_{(m+1)k} \subseteq P'$, for some 
$m \leq \binom{n+d}{d}$, implying $P_{N} \subseteq P'$.
\end{proof}

We now present an algorithm 
for $\LFS(q,n,d)$ and 
show that it solves the problem efficiently when the input contains a polynomially large number of 
samples $u_1,\ldots,u_N \in \F_q^n$  
from an $(A,\ell)$-distribution, with $|A| = d$ constant.


\begin{algorithm}[H]
\caption{Algorithm for \LFS$(q,n,d)$.} \label{alg:LFS}
\begin{algorithmic}[1]
\REQUIRE 
A set $A \subset \F_q$ of cardinality $d$ and a sequence 
of $N$ elements $u_1,\ldots,u_N$ from $\F_q^n$.
\\
\begin{enumerate} 
\item Find a nonzero polynomial $g(x_1,\ldots,x_n)$ of degree at most
$d$ over $\F_q$, if exists, such that $g(u_i)=0$ for $i=1,\ldots,N$. 
 \\
\item Compute the  linear factors of $g$. 
\\
\item Find a linear factor $f$ of $g$ and a 
nonzero element $\gamma\in \F_q$, if exist, such that 
$\gamma(f(u_i)-f(0))\in A$, for $i=1,\ldots,N$.
Return the linear function $\gamma(f(x_1,\ldots,x_n)-f(0))$. \\
 \end{enumerate} 
\end{algorithmic}
\end{algorithm}

\begin{theorem} \label{thm:LFS} 
There is a randomized implementation of {\bf Algorithm~\ref{alg:LFS}}
which runs in time polynomial in $\log q,\binom{n+d}{d}$ and $N$.
Moreover, when $u_1,\ldots,u_N$ are independent samples  from  
an $(A,\ell)$-distribution on $\F_q^n$ where $|A| = d$ and 
$N=\Omega\left(\binom{n+d}{d} {d^2\log\binom{n+d}{d}}\right)$,  then it 
finds successfully $\ell$ up to a constant factor with probability at least $1/2$. 
\end{theorem}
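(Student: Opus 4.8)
The plan is to walk through the three steps of \textbf{Algorithm~\ref{alg:LFS}}, first explaining how each step is implemented within the stated time bound, and then showing that, conditioned on the favourable event of Lemma~\ref{lem:poly-shrink} (which has probability at least $1/2$ under the sampling hypothesis), the algorithm both terminates with an output and can only output nonzero constant multiples of $\ell$.

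For the implementation and running time: Step~1 is a homogeneous linear system over $\F_q$ whose unknowns are the $\binom{n+d}{d}$ coefficients of a degree-$\le d$ polynomial and whose $N$ equations are $g(u_i)=0$; a nonzero solution (if one exists) is found by Gaussian elimination. Step~2 asks for the degree-one factors of the $n$-variable, degree-$\le d$ polynomial $g$; here I would invoke a standard polynomial-time randomized algorithm for factoring multivariate polynomials over a finite field, whose cost is polynomial in the number $\binom{n+d}{d}$ of coefficients of $g$ and in $\log q$, and I note that $g$ has at most $d$ distinct linear factors. Step~3 loops over these $\le d$ candidate factors $f$, evaluates $w_i:=f(u_i)-f(0)\in\F_q$ at each sample, and searches for $\gamma\in\F_q^\ast$ with $\gamma w_i\in A$ for all $i$: if some $w_i\neq 0$ the only possibilities are the $d$ elements $\{w_i^{-1}a:a\in A\}$, tested one at a time, while if all $w_i=0$ then any $\gamma$ works and this case occurs iff $0\in A$. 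Adding these up gives running time polynomial in $\log q$, $\binom{n+d}{d}$ and $N$, the only randomness being that of the factorization routine.

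For correctness, assume the samples are drawn from an $(A,\ell)$-distribution with $|A|=d<q$ and $N=\Omega(\binom{n+d}{d}d^2\log\binom{n+d}{d})$, and assume the high-probability event of Lemma~\ref{lem:poly-shrink} occurs. First, Step~1 never fails, since $f_{(A,\ell)}(x):=\prod_{a\in A}(\ell(x)-a)$ is a nonzero polynomial of degree exactly $d$ (its top homogeneous part is $\ell(x)^d\neq 0$) that vanishes on every sample because $\ell(u_i)\in A$. On the conditioned event the polynomial $g$ returned by Step~1 is divisible by $\ell(x)-\alpha_p$, which is a degree-one polynomial over $\F_q$ and therefore appears, up to a scalar, among the linear factors found in Step~2; writing that factor as $f=c(\ell(x)-\alpha_p)$ with $c\neq 0$ we get $f(x)-f(0)=c\,\ell(x)$, so $\gamma=c^{-1}$ satisfies $\gamma(f(u_i)-f(0))=\ell(u_i)\in A$ for all $i$, and Step~3 does return some homogeneous linear function. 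Finally, suppose Step~3 returns $h(x):=\gamma(f(x)-f(0))$, a nonzero homogeneous linear form with $h(u_i)\in A$ for all $i$. Then $\prod_{a\in A}(h(x)-a)$ has degree $d$ and vanishes on every sample, hence — again on the conditioned event — is divisible by $\ell(x)-\alpha_p$; since the $d$ polynomials $h(x)-a$ are pairwise non-associate degree-one irreducibles in the unique factorization domain $\F_q[x_1,\dots,x_n]$, the irreducible $\ell(x)-\alpha_p$ must be associate to one of them, and comparing top homogeneous parts forces $h$ to be a nonzero constant multiple of $\ell$. This yields the success probability of at least $1/2$.

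The real mathematical work is already done in Lemma~\ref{lem:poly-shrink}, so the rest is mostly bookkeeping, and the only step that is not purely mechanical is the last one. The subtle point there is that it is not enough to exhibit \emph{one} good linear factor available to Step~3 — one must also rule out the algorithm returning a \emph{wrong} factor (or a wrong scalar $\gamma$), and the clean way to do this is to feed the candidate $h$ back into the divisibility conclusion of Lemma~\ref{lem:poly-shrink} through the auxiliary polynomial $\prod_{a\in A}(h(x)-a)$. A secondary thing to pin down is a multivariate factorization result valid over arbitrary finite fields (prime powers included) with cost polynomial in the number of monomials and in $\log q$; with that in hand the time bound is immediate.
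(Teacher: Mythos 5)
Your proposal is correct and follows essentially the same route as the paper's own proof: Gaussian elimination for Step~1, a randomized multivariate factoring algorithm (the paper uses Kaltofen's) for Step~2, the same $\gamma=\alpha/\beta$ enumeration for Step~3, and the same two-part correctness argument — existence of a good factor via Lemma~\ref{lem:poly-shrink}, and exclusion of wrong outputs by feeding $\prod_{a\in A}(h(x)-a)$ back through the divisibility conclusion and invoking unique factorization. No substantive differences to report.
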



\begin{proof}
We first describe the randomized implementation with the claimed running time.
Throughout the proof by polynomial time we mean time
polynomial in $\log q,\binom{n+d}{d}$ and $N$. 
For Step 1, we consider the $\binom{n+d}{d}$ dimensional vector space of $n$-variable polynomials over $\F_q$
of degree at most $d$. 
The system of requirements $g(u_i)=0$, for $i=1,\ldots,N$, is
equivalent to a system of $N$ homogeneous linear equations 
for the $\binom{n+d}{d}$ coefficients of $g$,
where in the $i$th equation,
the coefficients of the variables are the values of the monomials taken at $u_i$.
Therefore 
a solution, if exists, can be computed in polynomial time using standard linear algebra. 

We use Kaltofen's
algorithm~\cite{Kaltofen85} 
to find the irreducible factors of $g$. The finite field case of Kaltofen's
algorithm is discussed in section 4.3 in \cite{GK83}. The algorithm is a Las Vegas
randomized algorithm that runs in polynomial time 
given the representation of the input polynomial
as a list of all coefficients. We can then easily select the 
linear factors out of the irreducible factors, therefore Step 2 can also be done in polynomial time.

For  Step 3, note that $g$ has at most $d \leq n$ linear factors, therefore it is enough to
see that each individual factor $f$ can be dealt with in polynomial time.
This can be done as follows. If $f(u_i)=f(0)$ for every $i$, then an appropriate $\gamma$ can be found
if and only if $0\in A$. Indeed, if $0\in A$ then any 
nonzero $\gamma$ satisfies the condition, while otherwise no satisfying $\gamma$ exists.
Otherwise, pick any $i$ such that
$\beta = f(u_i)-f(0)\neq 0$ and try $\gamma=\alpha/\beta$ for every
$\alpha\in A$. 

We now turn to the proof of correctness of the algorithm when the samples 
come from an $(A,\ell)$-distribution.
As 
$\prod_{\alpha\in A}\left(\ell(u_i)-\alpha\right)=0$,
for every $i$,
the algorithm finds a nonzero polynomial $g$ in Step 1.
By Lemma~\ref{lem:poly-shrink}, with probability at least 1/2, every polynomial 
of degree at most $d$, which is zero on $u_i$, for $i=1,\ldots,N$,
is divisible by $\ell(x_1,\ldots,x_n)-\alpha_p$. Assume that this is the case.
Then, in particular, $g$ has a linear factor $f(x)$ which is a constant multiple of $\ell(x)-\alpha_p$,
that is $f(x) = \beta (\ell(x) - \alpha_p)$, for some non zero $\beta \in \F_q$.
It is easy to check that for $\gamma = \beta^{-1}$, we have 
$$ \gamma ( f(x) - f(0)) = \ell(x),$$
and therefore $ \gamma ( f(u_j) - f(0)) \in A$, for $i = 1, \ldots , N$.
Thus the algorithm in its last step will find successfully and return
a linear function $\ell'(x)$ such that $\ell'(u_i)\in A$, for every $i$. 

To finish the proof, we claim that $\ell'(x)$ is a constant multiple of $\ell(x)$.
The polynomial
$h(x_1,\ldots,x_n)=\prod_{\alpha \in A}(\ell'(x_1,\ldots,x_n)-\alpha)$
is zero on every $u_i$ and hence, by our assumption, 
$h(x_1,\ldots,x_n)$ is divisible by $\ell(x_1,\ldots,x_n)-\alpha_p$.
Then, as $\F_q[x_1,\ldots,x_n]$ is a unique factorization domain,
there exists $\alpha \in A$ such that 
$\ell'(x_1,\ldots,x_n)- \alpha$ is a scalar multiple of
$\ell(x_1,\ldots,x_n)-\alpha_p$, 
implying the claim.
\end{proof}
Theorem~\ref{thm:LFSconstant} is an immediate consequence of this result. For constant $d$ we have the
following corollary.

\begin{corollary}
There is a randomized algorithm that solves $\LFS(q,n,d)$ for constant $d$ with sample complexity $\poly(n)$ and 
running time $\poly(n, \log q)$. 
\end{corollary}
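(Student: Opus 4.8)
The plan is to obtain the corollary as a direct specialization of Theorem~\ref{thm:LFS} to the regime where $d$ is a fixed constant; after that it is essentially bookkeeping. First I would invoke Theorem~\ref{thm:LFS}, which supplies a randomized implementation of {\bf Algorithm~\ref{alg:LFS}} running in time polynomial in $\log q$, $\binom{n+d}{d}$ and $N$, and which, on a sequence of $N$ independent samples from an $(A,\ell)$-distribution with $|A|=d$, returns a nonzero scalar multiple of $\ell$ with probability at least $1/2$ whenever $N=\Omega\!\left(\binom{n+d}{d}\,d^2\log\binom{n+d}{d}\right)$.

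Second, I would estimate the two quantities governing these bounds when $d=O(1)$. From $\binom{n+d}{d}\le (n+d)^d/d!$ one gets $\binom{n+d}{d}=O(n^d)=\poly(n)$, hence $d^2\log\binom{n+d}{d}=O(\log n)$. Therefore the sample threshold is $N=O(n^d\log n)=\poly(n)$, so it suffices to draw $N=\poly(n)$ samples; substituting this value back into the running time bound of Theorem~\ref{thm:LFS} makes it polynomial in $\log q$ and $n$. This already produces a randomized algorithm for $\LFS(q,n,d)$ with sample complexity $\poly(n)$, running time $\poly(n,\log q)$, and success probability at least $1/2$, which is what ``solves'' requires in the usual sense.

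Finally, if one wants the error probability to be an arbitrarily small constant or even $2^{-\poly(n)}$, I would boost it by running the procedure $k$ times on fresh independent batches of $\poly(n)$ samples and testing each returned candidate $\ell'$ on an additional held-out batch, keeping the first candidate that passes. A correct candidate (any nonzero scalar multiple of $\ell$) satisfies $\ell'(u)\in A$ on every sample, while an $\ell'$ not proportional to $\ell$ satisfies $\ell'(u)\in A$ with probability only $d/q\le d/(d+1)$ on a fresh sample (conditioned on the level set of $\ell$, the value $\ell'(u)$ is uniform on $\F_q$ when $n\ge 2$, the case $n=1$ being trivial), so $O(d\log(k/\varepsilon))=O(\log(k/\varepsilon))$ test samples reject it except with probability $\varepsilon/k$; taking $k=O(\log(1/\varepsilon))$ then gives overall failure probability at most $\varepsilon$ with the complexities still $\poly(n)$ and $\poly(n,\log q)$. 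I do not anticipate a real obstacle here: the core reduction is a one-line specialization of Theorem~\ref{thm:LFS}, and the only point needing a line of care is this verification argument in the optional amplification step—but since a single run already succeeds with constant probability, even the crude test above suffices, and under the minimal reading of ``solves'' the amplification is unnecessary altogether.
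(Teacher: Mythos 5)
Your proof is correct and matches the paper exactly: the paper states this corollary with no proof, treating it as the immediate specialization of Theorem~\ref{thm:LFS} to constant $d$ via $\binom{n+d}{d}=O(n^d)$, which is precisely your first two paragraphs. The optional amplification step is not needed for the statement as written (and your verification test has a small wrinkle there, since a returned answer $c\ell$ with $c\neq 1$ is a correct output yet need not satisfy $c\ell(u)\in A$ on every fresh sample), but this does not affect the corollary.
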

We next present our algorithms for $\HMS(q,n,r)$, we first give a basic Fourier sampling based algorithm in section \ref{sec:large} and then 
an algorithm that reduces $\HMS(q,n,r)$ to $\LFS(q,n, q-r +1)$ in section \ref{sec:HMT}.

\section{Fourier sampling algorithm for $\HMS(q,n,r)$} 
\label{sec:large}
We describe first the standard pre-processing steps for the $\HMS(q,n,r)$ problem, for our algorithms we assume that 
the inputs have been pre-processed in this manner. 

A quantum algorithm for the $\HMS(q,n,r)$ problem is given oracle access to
$f_s:\Z_q^n\times H\rightarrow \{0,1\}^l$. 
We describe the standard pre-processing procedure applicable in this context.
We start with the uniform superposition, append a register consisting of $l$ qubits, 
initialized to $0$ and query the oracle for $f_{s}$ to obtain, 
$$\frac{1}{\sqrt{q^n r}}\sum_{v\in \Z_q^n}\sum_{h\in H}\ket{v}\ket{h} \to \frac{1}{\sqrt{q^n r}}\sum_{v\in \Z_q^n}\sum_{h\in
H}\ket{v}\ket{h}\ket{f_{s}(v, h)}.$$
The last $l$ qubits are then measured to obtain the state, 
$$\psi_s^w:=\frac{1}{\sqrt{r}}\sum_{h\in H}\ket{w+hs}\ket{h},$$
where $w\in \Z_q^n$ is uniformly random. This $w$ is the 
unique element of $\Z_q^n$ such that the measured value for
the function $f_{s}$ equals $f(w)$. 

Let 
$\omega= e^{2\pi i /q}$ be the $q$-th root of unity and let $(u,v)$ be the 
standard inner product for vectors $u, v \in \Z_q^{n}$, that is 
$(u,v) = \sum_{i \in [n]} u_{i} v_{i} \mod q$. It is usual to apply the 
Fourier transform on $\Z_q^n$ to the states $\psi_s^w$ to get
$$\frac{1}{\sqrt{q^n r}}
\sum_{u\in \Z_q^n}\sum_{h\in H}\omega^{(u,w+hs)}
\ket{u}\ket{h}=
\frac{1}{\sqrt{q^n}}\ket{u}\otimes \omega^{(u,w)}\phi_s^u,$$
where
$$
\phi_s^u:=\frac{1}{\sqrt{r}} \sum_{h\in H}\omega^{(u,hs)}\ket{h}. 
$$

If we measure $u$, we obtain a pair $(u,\tau)$,
where $u$ is a uniformly random element of $\Z_q^n$ and
$\tau=\phi_s^u$. It is standard to pre-process the inputs to the 
$\HMS$ in this manner. We can therefore assume without loss of generality that the inputs for $\HMS(q,n,H)$
are $N$ samples of the form $(u,\phi_s^u)$ for uniformly random  $u\in \Z_q^n$, and the 
goal of the algorithm is to recover the secret $s \in \Z_q^{n}$.

We next propose a Fourier sampling based algorithm for $\HMS(q,n,r)$ and prove Theorem \ref{thm:large}. 
The basic idea for the algorithm is to consider the input state $\phi_s^u=\frac{1}{\sqrt{r}}\sum_{h\in H}\omega^{(u,hs)}\ket{h}$ for $\HMS(q,n,r)$, 
 as an approximation to the state 
$$\kappa_s^u: =\frac{1}{\sqrt{q}}\sum_{h=0}^{q-1}\omega^{(u,hs)}\ket{h}.$$
The inner product between the two states is
${\phi_s^u}^{\dagger}\cdot\kappa_s^u  =
 \frac{1}{\sqrt{qr}} \sum_{h\in H} 1 =\sqrt{r/q}.$
 
The inverse Fourier transform on $\Z_q$, when applied to $\kappa_s^u$ 
gives $\ket{(u,s)}$. If we could determine the inner products $\ket{u,s}$ for 
a set of $n$ linearly independent $u_i$ for prime $q$, 
then $s$ can be determined by 
solving a system of linear equations. More generally, in order to make this approach work $k$ should be large enough so that the $u_{i}$ generate $\Z_q^{n}$, in this case the secret $s$ can be recovered from the inner products using linear algebra. In fact, the following Lemma from \cite{Pomerance} shows that the additive group $\Z_q^n$ is generated by $k=n + O(1)$ random elements of $\Z_q^n$ with constant probability.   
  \begin{fact} \label{fact:generate} \cite{Pomerance} 
Let $G$ be a finite abelian group with a minimal generating set of size $r$. The expected number of elements chosen independently and uniformly at random from $G$ 
such that the chosen elements generate $G$ is at most $r+ \sigma$ where $\sigma < 2.12$ is an explicit constant. 
\end{fact}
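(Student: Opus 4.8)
The plan is to evaluate the expected stopping time exactly as an infinite series, reduce the generation condition to a prime-by-prime spanning condition, and then bound the resulting series by an absolute constant controlled by the Riemann zeta function. Let $T$ be the number of elements drawn (independently and uniformly, one at a time) until they generate $G$, and let $P_k$ be the probability that $k$ such elements generate $G$. Since $T$ is a non-negative integer random variable, $\mathbb{E}[T]=\sum_{k\ge 0}\Pr[T>k]=\sum_{k\ge 0}(1-P_k)$, so everything reduces to understanding $P_k$. Writing $G\cong\bigoplus_{p\mid |G|}G_p$ as a direct sum of its $p$-primary components, the Burnside basis theorem (equivalently Nakayama's lemma applied to the abelian $p$-group $G_p$) says that elements generate $G_p$ iff their images generate $G_p/pG_p\cong G/pG\cong\F_p^{r_p}$, where $r_p=\dim_{\F_p}G/pG$. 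Hence the drawn elements generate $G$ iff their reductions span $\F_p^{r_p}$ for \emph{every} $p\mid|G|$, and the minimal number of generators is $r=\max_p r_p$. By the Chinese Remainder Theorem the reductions of the sample sequence modulo distinct primes are independent, each uniform, so the spanning events at time $k$ are independent across $p$ and $P_k=\prod_{p\mid|G|}s_p(k)$, where $s_p(k)=\Pr[k\text{ uniform vectors span }\F_p^{r_p}]=\prod_{i=0}^{r_p-1}\bigl(1-p^{\,i-k}\bigr)$ is the classical full-rank probability of a random $k\times r_p$ matrix over $\F_p$.

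Next I would isolate the ``trivial'' part of the sum. Because $s_p(k)=0$ whenever $k<r_p$, we get $P_k=0$ for all $k<r$, so each of the first $r$ terms equals $1$ and contributes exactly $r$. Thus $\mathbb{E}[T]=r+\sum_{m\ge 0}(1-P_{r+m})$, and it remains to show that the tail $\sum_{m\ge 0}(1-P_{r+m})$ is at most an absolute constant $\sigma<2.12$, uniformly over all finite abelian $G$. The key is a lower bound on $P_{r+m}$ that does not depend on the detailed structure of $G$. Reindexing, $s_p(r+m)=\prod_{j=r+m-r_p+1}^{\,r+m}(1-p^{-j})$, and since $r_p\le r$ the smallest exponent appearing is at least $m+1$; extending the product to all $j\ge m+1$ (each extra factor is $<1$) gives $s_p(r+m)\ge\prod_{j\ge m+1}(1-p^{-j})$. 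Multiplying over $p\mid|G|$ and then over \emph{all} primes (again only decreasing the product) yields $P_{r+m}\ge\prod_{j\ge m+1}\prod_p(1-p^{-j})=\prod_{j\ge m+1}\zeta(j)^{-1}$. Consequently $\sum_{m\ge 0}(1-P_{r+m})\le\sigma$, where $\sigma:=\sum_{m\ge 0}\bigl(1-\prod_{j\ge m+1}\zeta(j)^{-1}\bigr)$ is a quantity entirely independent of $G$.

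The final step is to show $\sigma<2.12$ and to pin down the explicit constant. The terms decay geometrically: since $\zeta(j)^{-1}=\prod_p(1-p^{-j})=1-2^{-j}+O(3^{-j})$, one has $\prod_{j\ge m+1}\zeta(j)^{-1}=1-2^{-m}+o(2^{-m})$, so the $m$-th term is $\approx 2^{-m}$ and the series converges rapidly; a direct evaluation gives $\sigma\approx 2.118$. I would make this rigorous by summing the first several terms exactly and bounding the remainder by an explicit geometric series.

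The main obstacle is exactly this last estimate, and it is genuinely delicate rather than routine. The bound is essentially tight: the family $\bigoplus_{p}\F_p^{r}$ taken over all primes, with $r\to\infty$, forces $P_{r+m}\to\prod_{j\ge m+1}\zeta(j)^{-1}$, so the supremum of the tail over all groups really equals $\sigma\approx 2.118$ and is approached but never attained. This means a crude union bound cannot work — indeed replacing $1-\prod_p s_p$ by $\sum_p(1-s_p)$ produces the divergent series $\sum_p \tfrac{p}{(p-1)^2}$ — and the proof must instead exploit the product structure through the zeta-function comparison above. Establishing $\sigma<2.12$ therefore hinges on the careful analytic/numerical control of $\sum_{m\ge 0}\bigl(1-\prod_{j\ge m+1}\zeta(j)^{-1}\bigr)$, which is the technical heart of the argument.
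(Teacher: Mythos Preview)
The paper does not prove this statement at all: it is stated as a \texttt{Fact} with the citation \cite{Pomerance} and is used as a black box in Section~\ref{sec:large}. There is therefore no ``paper's own proof'' to compare against; your proposal is essentially a sketch of Pomerance's original argument, and as such it is the expected route.

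Your outline is correct in its architecture: the tail formula $\mathbb{E}[T]=r+\sum_{m\ge 0}(1-P_{r+m})$, the Chinese Remainder reduction to independent spanning events in $\F_p^{r_p}$, the lower bound $P_{r+m}\ge\prod_{j\ge m+1}\zeta(j)^{-1}$ obtained by extending the product over all $j$ and all primes, and the identification of $\sigma=\sum_{m\ge 0}\bigl(1-\prod_{j\ge m+1}\zeta(j)^{-1}\bigr)$ as a universal constant. You are also right that the bound is asymptotically sharp and that a naive union bound over primes diverges. One small point worth flagging: for $m=0$ your product $\prod_{j\ge 1}\zeta(j)^{-1}$ vanishes (the $j=1$ factor is $0$), so that term of $\sigma$ is simply $1$ and the inequality $P_r\ge 0$ is vacuous there; this is harmless for the final numerical bound but should be stated cleanly rather than swept into the general formula. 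With that caveat, your plan matches Pomerance's proof and would yield the stated $\sigma<2.12$ after the numerical verification you describe.
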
 
The above 
fact holds for any abelian group, for the special case of $\Z_{q}^{n}$ we have $r=n$ and the constant $\sigma$ can be taken to be $1$ \cite{vincenzo}. 
We therefore have that $k=2n+ O(1)$ random elements of $\Z_q^n$ generate the additive group $\Z_q^n$ with constant probability. 

If we apply the Fourier transform to each $\phi_s^{u_i}$, with probability $(r/q)^{k/2}$ we obtain the scalar products of $s$ with the members of a generating set for $\Z_q^n$.
The answer $s$ may be verified by repeating the experiment for 
$\poly(n) (q/r)^{k/2}$ trials and finding the most frequently occurring solutions over the different trials.

\begin{thm:large}
There is a quantum algorithm that solves $\HMS(q,n,r)$ with high probability in time $O(\poly(n)(\frac{q}{r}) ^{n+ O(1)})$.
\end{thm:large}
\noindent We next show that the above algorithm runs in time $\poly(n)$ 
for parameters $q,r$ such that $\frac{r}{q} = 1 - \Omega(\frac{\log n}{n})$. For this choice of parameters, we 
can bound the factor $(\frac{q}{r}) ^{n + O(1)}$ in the running time bound above as follows, 
$$
 \left (  \frac{r}{q} \right ) ^{n  + O(1) } \geq \left ( 1 - \frac{ c_1 \log n }{n} \right )^{c_2 n + c_{3}} \geq  e^{-c \log n} = n^{-O(1)} 
$$
where $c, c_{1}, c_{2}$ are suitable constants. We therefore have,

\begin{corollary}
\label{thm:large-cor}
If $\frac{r}{q} = 1 - \Omega(\frac{\log n}{n})$, then there is a quantum algorithm that solves $\HMS(q,n,r)$ with high probability in time $\poly(n)$.
\end{corollary}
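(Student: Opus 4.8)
The plan is to obtain the corollary as a direct specialization of Theorem~\ref{thm:large}: that theorem already supplies a quantum algorithm for $\HMS(q,n,r)$ succeeding with high probability in time $O(\poly(n)(q/r)^{n+O(1)})$, so the only thing left is to verify that the base factor $(q/r)^{n+O(1)}$ collapses to $\poly(n)$ in the stated parameter range. First I would invoke Theorem~\ref{thm:large} verbatim and observe that the $\poly(n)$ prefactor and the ``high probability'' guarantee already have the form claimed in the corollary, so the entire burden is the estimate on $(q/r)^{n+O(1)}$.

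Next I would unpack the hypothesis $r/q = 1 - \Omega((\log n)/n)$ as: there is a constant $c_1 > 0$ with $r/q \ge 1 - c_1(\log n)/n$ for all sufficiently large $n$, and in particular $c_1(\log n)/n \le 1/2$ eventually. Absorbing the additive $O(1)$ in the exponent into constants $c_2, c_3$, I would then write
$$\left(\frac{r}{q}\right)^{n+O(1)} \ge \left(1 - \frac{c_1\log n}{n}\right)^{c_2 n + c_3} \ge e^{-c\log n} = n^{-O(1)},$$
where the middle inequality uses the elementary bound $(1-x)^m \ge e^{-2mx}$ valid for $0 \le x \le 1/2$ (here $m = c_2 n + c_3$ and $x = c_1(\log n)/n$, so $2mx = 2c_1 c_2 \log n + o(1) \le c\log n$ for a suitable constant $c$). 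Equivalently $(q/r)^{n+O(1)} \le n^{O(1)}$, and substituting this into the running time from Theorem~\ref{thm:large} gives total time $O(\poly(n)\cdot n^{O(1)}) = \poly(n)$, which is the corollary.

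There is essentially no real obstacle here; only two bookkeeping points deserve a glance. One must be sure the $O(1)$ in the exponent of Theorem~\ref{thm:large} is a genuine additive constant independent of $q$ — it is, being the constant number of extra random generators needed in Fact~\ref{fact:generate} — so that it contributes only a bounded power of the quantity $r/q \le 1$ and cannot degrade the bound. And one must note that the inequality $(1-x)^m \ge e^{-2mx}$ is only applied once $n$ is large enough that $c_1(\log n)/n \le 1/2$, which is harmless since the corollary is an asymptotic statement in $n$.
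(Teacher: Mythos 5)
Your proposal is correct and follows essentially the same route as the paper: invoke Theorem~\ref{thm:large} and show $\left(\frac{r}{q}\right)^{n+O(1)} \geq \left(1 - \frac{c_1 \log n}{n}\right)^{c_2 n + c_3} \geq e^{-c\log n} = n^{-O(1)}$, so the running time collapses to $\poly(n)$. The only difference is that you make the elementary inequality $(1-x)^m \geq e^{-2mx}$ for $0 \leq x \leq 1/2$ explicit, where the paper leaves the middle step to the reader; this is a harmless and slightly more careful rendering of the same argument.
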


\section{Reducing $\HMS(q,n,r)$ to $\LFS(q,n, q-r +1)$}\label{sec:HMT} 

In this section we assume that $q$ is a prime number and work over the field $\F_{q}$. 
Recall that the input for $\HMS(q,n,r)$ is a collection of samples of vector-state pairs $(u, \phi_s^u)$ where $u$ 
is a uniformly random vector from $\F_q^n$, and 
$\phi_s^u=\frac{1}{\sqrt r}\sum_{h\in H}\omega^{(u,s)h}\ket{h}.$ For $t\in \F_q$ define the state
$$\mu_t:=\frac{1}{\sqrt r}\sum_{h\in H}\omega^{ht}\ket{h},$$
so that $\phi_s^u=\mu_{(u,s)}$. 

The approach in Section~\ref{sec:large}  recovers the inner product $(u_{i}, s)$ for $O(n)$ random vectors $u_{i}$ and then 
uses Gaussian elimination to determine $s$ with high probability. However, the $\mu_t$'s are only nearly orthogonal to each other, so 
the measurement in Section~\ref{sec:large} may fail to recover the correct value of $(u,s)$ with
probability too large for our purposes. 

A particularly interesting case is when 
$q=\poly(n)$ and $c=q-r$ is a constant for which we provide a polynomial time algorithm in 
Corollary~\ref{cor:eff}. In this case, the error probability for 
the measurement in Section~\ref{sec:large} is $1-r/q=c/q$, that is there are a constant expected number of errors 
for every $q$ samples. If $q=O(n^{\alpha})$ for $\alpha <1$ then there 
are $O(n^{1-\alpha})$ errors in expectation for every $n$ samples. There are no known polynomial time algorithms 
for recovering the secret $s \in \Z_q^{n}$ from a system of $n$ linear equations where 
an $O(n^{1-\alpha})$ fraction of the equations are incorrect for a constant $\alpha$.

\subsection{The reduction} 

Instead, we reduce $\HMS(q,n,r)$ to $\LFS(q, n, d)$ with $d=(q-r)+1$,
$A=\{r-1,\ldots,q-1\}$ and the linear function $\ell(\cdot)$ given by $\ell(x)=(s,x)$. 
We then use the Algorithm \ref{alg:LFS} to recover a scalar 
multiple of $s_{0}=\lambda s$. Further, we show that the scalar $\lambda$ can be recovered 
efficiently.

The reduction performs a quantum measurement on $\phi_s^u$  to determine if $(u,s)$ belongs to $A=\{r-1,\ldots,q-1\}$. 
We discard the $u$'s which do not belong to $A$, and also some of the $u$'s such that 
$(u,s)\in A$ to obtain samples from an $(A, \ell)$ distribution. 
We next provide a sketch 
of the reduction $\HMS(q,n,r)$ to $\LFS(q, n, d)$, the reduction is analyzed over the 
next few subsections and a more precise statement is given in Proposition \ref{prop:sampling}. 

Let $V$ be the hyperplane spanned by $\mu_0,\ldots,\mu_{r-2}$. Let 
$(u, \phi_s^u)$ be a pair from the input samples. We perform
the measurement on $\phi_s^u$ according to
the decomposition of $\C^{r}=V\oplus V^\perp$, and retain 
$u$ if and only if the result of the measurement is ``in $V^\perp$". Otherwise we discard $u$. 
An efficient implementation of the measurement in $(V, V^{\perp})$ 
is given in subsection \ref{ssec:mmt}.

Observe that measuring a state $\mu_j$ ``in $V^\perp$"
is only possible if $\mu_j\not\in V$, in particular
$j\not\in \{0,\ldots,r-2\}$. Thus if we measure 
$\phi_s(u)$ ``in $V^\perp$" we can be sure that 
$(s,u)$ is in $A=\{r-1,\ldots,q-1\}$. We only keep $u$ from a sample pair $(u,\tau)$ if this measurement,
applied to the state $\tau$, results ``in $V^\perp$". The $u$'s that are retained are 
samples from an $(A, \ell)$ distribution over $\F_q^n$.

\subsection{The success probability}

We bound the probability of retaining a sample pair $(u, \phi_s^u)$ for this procedure. 
We bound the success probability for the special case when $(s,u)=r-1$. As $u$ is uniformly random over $\F_q^{n}$ the value of $(s,u)$ 
is uniformly distributed over $\Z_q$, this bound therefore suffices for our purposes.

In the standard basis $\ket{h}$ ($h\in H$), the vector $\mu_t$ has entry $\frac{1}{\sqrt r}\omega^{ht}$ 
in the $h$-th position. Let $A \in \C^{r \times r}$ be the matrix with rows from 
the collection $\{\mu_t:0\leq t\leq r-1\}$, that is
\begin{equation} \label{amat} 
A=\frac{1}{\sqrt{r}}
\begin{pmatrix}
1 & \omega^{h_1} & \cdots & \omega^{h_1(r-1)} \\
1 & \omega^{h_2} & \cdots & \omega^{h_2(r-1)} \\
\vdots & \vdots &\ddots & \vdots \\
1 & \omega^{h_r} & \cdots & \omega^{h_r(r-1)} \\
\end{pmatrix},
\end{equation} 
where $h_1,\ldots,h_r$ are the elements of $H$, say, in increasing order. 
The matrix $A$ is
$\frac{1}{\sqrt{r}}$ times a Vandermonde matrix and as such,
by Fact~\ref{fact:Vandermonde-det},
has determinant
$$r^{-r/2}\prod_{j<i\leq r}(\omega^{h_i}-\omega^{h_j}).$$
In particular, the states $\mu_0,\ldots,\mu_{r-1}$ are 
linearly independent. With a more careful analysis, we
show in Lemma \ref{lem:Vandermonde-bound} below that $\det A$ is sufficiently far from zero.

\medskip 
\begin{lemma} \label{lem:Vandermonde-bound} 
Let $c=q-r$ and $A$ be the matrix in  \eqref{amat} then $ |\det A^*A|=\Omega(q^{-c^{2}} \left ( \frac{q}{r} \right )^{r}).$
\end{lemma}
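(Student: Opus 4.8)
The plan is to compute $\det(A^*A)$ explicitly via the Vandermonde structure and then bound the resulting product of differences of roots of unity from below. Since $A = \frac{1}{\sqrt r}\,V$ where $V$ is the $r\times r$ Vandermonde matrix on the nodes $\omega^{h_1},\ldots,\omega^{h_r}$, Fact \ref{fact:Vandermonde-det} gives $\det V = \prod_{1\le j<i\le r}(\omega^{h_i}-\omega^{h_j})$, hence $|\det A^*A| = |\det A|^2 = r^{-r}\prod_{1\le j<i\le r}|\omega^{h_i}-\omega^{h_j}|^2$. The whole task is therefore to show $\prod_{1\le j<i\le r}|\omega^{h_i}-\omega^{h_j}|^2 = \Omega(q^{-c^2} q^r)$, where $c = q-r$.

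The key step is to control each factor $|\omega^{h_i}-\omega^{h_j}|$. Writing $|\omega^a - \omega^b| = 2|\sin(\pi(a-b)/q)|$ and using that for an integer $m$ not divisible by $q$ one has $2|\sin(\pi m/q)| \ge \frac{4}{q}\cdot\operatorname{dist}(m,q\Z)$ (a constant times $\|m/q\|$, the distance of $m/q$ to the nearest integer), the problem reduces to lower bounding $\prod_{j<i}\|\,(h_i-h_j)/q\,\|$. First I would reduce to the extremal/worst case: since the product is minimized when $H$ is as ``clustered'' as possible, and since $|H| = r = q-c$, the set $H$ omits exactly $c$ residues from $\Z_q$; the differences $h_i - h_j$ as unordered pairs range over a multiset closely approximating all of $\{1,\ldots,q-1\}$ each appearing roughly $r$ times, with only $O(c)$ pairs per difference-value ``missing.'' Concretely, for each nonzero residue $m$, the number of pairs $(i,j)$ with $h_i - h_j \equiv \pm m$ is $r - (\text{something} \le 2c)$. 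So $\prod_{j<i}|2\sin(\pi(h_i-h_j)/q)|^2$ is, up to a factor $q^{O(c)}$ absorbing the discrepancy, comparable to $\prod_{m=1}^{q-1}|2\sin(\pi m/q)|^{r}$. Now the classical identity $\prod_{m=1}^{q-1}|1-\omega^m| = q$ (equivalently $\prod_{m=1}^{q-1} 2\sin(\pi m/q) = q$) gives $\prod_{m=1}^{q-1}|2\sin(\pi m/q)|^{r} = q^{r}$, and the $q^{O(c)}$ discrepancy together with the normalization $r^{-r}$ and a correction $q^{-c^2}$ coming from the at most $\binom{c+1}{2} = O(c^2)$ pairs whose difference could be as small as $\Theta(1/q)$ (contributing a factor as small as $q^{-2}$ each) yields the claimed bound $\Omega(q^{-c^2}(q/r)^r)$.

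The main obstacle will be the bookkeeping in the middle step: making precise the statement that the multiset of differences $\{h_i - h_j\bmod q\}$ is within $q^{O(c)}$ (multiplicatively) of the ``full'' multiset in which each nonzero residue occurs $r$ times, and correctly accounting for the smallest differences. The clean way to do this is: (i) extend $H$ to all of $\Z_q$, for which the product of $|2\sin|$ over all $\binom{q}{2}$ pairs equals exactly $q^{q}\cdot(\text{const})$ by the cyclotomic identity applied coordinate-wise; (ii) observe that passing from $\Z_q$ to $H$ removes exactly $c$ points, hence removes at most $c(q-1)$ pairwise factors, each of which is $|2\sin(\pi m/q)|$ for some $m\not\equiv 0$, and each such factor lies in $[\,\Theta(1/q),\,2\,]$; so the product drops by at most a factor $q^{c(q-1)}$ — too lossy. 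Instead, one must be more careful and note that among those removed factors, for each fixed value of $|m|$ at most $c$ pairs are removed, and the genuinely dangerous small factors (those with $|m|\le c$) number at most $O(c^2)$, contributing the $q^{-c^2}$; all remaining removed factors are $\Omega(1/q)$ but there are only $O(cq)$... this still needs the observation that removed factors with large $|m|$ are $\Omega(1)$ in bulk. I will therefore split removed factors by whether $\|m/q\| \ge c/q$ or not: the former are each $\ge \Omega(c/q)$ but one checks their product telescopes against the $(q/r)^r$ slack, and the latter are the $O(c^2)$ factors giving $q^{-c^2}$. Carrying this split through carefully is the crux of the argument.
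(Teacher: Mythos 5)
Your opening moves match the paper's: the exact evaluation $|\det A^*A|=r^{-r}\prod_{i\in H}\prod_{j\in H\setminus\{i\}}|\omega^{i}-\omega^{j}|$ via the Vandermonde formula, and the idea of comparing this product to the full product over $\Z_q$ using the cyclotomic identity $\prod_{m\neq 0}|1-\omega^m|=q$. The gap is exactly in the step you yourself flag as ``the crux'': upper-bounding the product of the removed factors (pairs with at least one endpoint in $\overline H:=\Z_q\setminus H$) by $q^{O(c^2)}$. Your proposed bookkeeping --- grouping removed pairs by their difference $m$ and splitting according to whether $\|m/q\|$ is small --- does not close. There are $\Theta(cq)$ removed factors, each possibly as large as $2$, so any bound that only uses ``at most $2c$ removed pairs per difference, each factor in $[\Omega(c/q),2]$'' loses a factor like $2^{\Theta(cq)}$, which swamps the target $q^{O(c^2)}$; and the ``$(q/r)^r$ slack'' you hope to telescope against is $(1+c/r)^r\leq e^{c}$, a constant for constant $c$, so it absorbs nothing of that size. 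You also have the danger pointing the wrong way: removed factors with small $|m|$ are small and, sitting in the denominator, only help the lower bound; the obstruction is the bulk of removed factors exceeding $1$. The assertion in your first paragraph that the kept multiset of differences matches the full multiset ``up to a factor $q^{O(c)}$'' is likewise unjustified (the true discrepancy is $q^{\Theta(c^2)}$ in the worst case).

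The paper avoids all factor-by-factor estimation by applying the identity $\bigl|\prod_{j\in\Z_q\setminus\{i\}}(\omega^{i}-\omega^{j})\bigr|=q$ \emph{twice}: once for each $i\in H$ against all of $\Z_q$, and once for each $j\in\overline H$ against all of $\Z_q$. This gives the exact identity
$$\prod_{i\in H}\prod_{j\in H\setminus\{i\}}|\omega^{i}-\omega^{j}|=\frac{q^{r}}{\prod_{i\in H}\prod_{j\in\overline H}|\omega^{i}-\omega^{j}|}=\frac{q^{r}}{q^{c}}\prod_{i\in\overline H}\prod_{j\in\overline H\setminus\{i\}}|\omega^{i}-\omega^{j}|,$$
so everything collapses to $q^{r-c}$ times a product of only $c(c-1)$ factors, each trivially at least $|1-e^{2\pi i/q}|=\Omega(1/q)$, yielding $\Omega(q^{-c^2}(q/r)^r)$ with exponent exactly $c^2$. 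Your difference-by-difference count can be made to recover this: the number of ordered removed pairs with difference $m$ is exactly $2c-n_m$ where $n_m$ counts ordered pairs inside $\overline H$ with that difference, and summing the exact identity over $m$ reproduces the display above; but you need the exact count (not ``at most $c$''), and the correction term is precisely the product over pairs of $\overline H$. As written, your argument would at best give $q^{-O(c^2)}$ with a larger constant in the exponent --- weaker than the lemma as stated, though still sufficient for Theorem~\ref{thm:small-prime}.
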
 

\begin{proof} 
As $|\det A^*A|= |\det A|^{2}$ we have
\begin{eqnarray*}
|\det A^*A|&=&
r^{-r}\left|\prod_{i\in H}\prod_{j\in H\setminus\{i\}}(\omega^{i}-\omega^{j})\right|\\
&=&r^{-r}\left|\left(
\prod_{i\in H}\prod_{j\in \Z_q \setminus\{i\}}(\omega^{i}-\omega^{j})
\right)\left(
{\prod_{i\in H} \prod_{j\in\overline H}(\omega^{i}-\omega^{j})}\right)^{-1}
\right| \\
&=&r^{-r}\left|\left(
\prod_{i\in H}\prod_{j\in \Z_q\setminus\{i\}}(\omega^{i}-\omega^{j})
\right)\left(
\prod_{j\in\overline H}\prod_{i\in \Z_q \setminus\{j\}}(\omega^{i}-\omega^{j})
\right)^{-1} \right.\\
& & \mbox{~~~~}\left.\cdot\left(
\prod_{j\in \overline H}\prod_{i\in {\overline H}\setminus\{j\}}(\omega^{i}-\omega^{j})
\right)\right| \\
&=&r^{-r}q^rq^{-c}
\prod_{j\in \overline H}\prod_{i\in {\overline H}\setminus\{j\}}|\omega^{i}-\omega^{j}| \\
&\geq & \left(\frac{q}{r}\right)^r q^{-c} 
\left|1-e^{\frac{2\pi i}{q}}\right|^{c(c-1)} 
= \Omega\left ( \left(\frac{q}{r}\right)^r q^{-c^{2}} \right).
\end{eqnarray*}
We used the identity $\prod_{j\in \Z_q\setminus \{i\}}(\omega^i-\omega^j)=
\omega^i\prod_{j\in \Z_q\setminus \{0\}}(1-\omega^j)=q\omega^i$
for the fourth equality. The final inequality follows as 
$|\omega^i-\omega^j|\geq |1-e^{\frac{2\pi i}{q}}| = \Omega(\frac{1}{q})$. 

\end{proof} 

\noindent Using the above lemma, we bound the probability of retaining $u$ if $(u,s)=r-1$. It might be possible to prove similar bounds for other values of $(s,u) \in A$, 
however the bound for the particular value $r-1$ suffices for our purpose.

\begin{lemma}
\label{lem:measure-bound}
The $(V,V^\perp)$-measurement applied to a state of the
form $\mu_t$, returns ``in $V$" with probability $1$ if
$t\in \{0,\ldots,r-2\}$, while for $t\in \{r-1,\ldots,q-1\}$,
the probability that ``in $V^\perp"$ is returned depends only
on $t$ and is $\Omega\left (q^{-c^{2}}\left( \frac{q}{r} \right )^{r} \right)$ for $t=r-1$.
\end{lemma}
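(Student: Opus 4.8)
The plan is to read off the measurement probability geometrically. Let $P_{V^\perp}$ be the orthogonal projection of $\C^r$ onto the line $V^\perp$ (recall $V$ is $(r-1)$-dimensional, since $\mu_0,\ldots,\mu_{r-1}$ are already known to be linearly independent). Each $\mu_t=\frac1{\sqrt r}\sum_{h\in H}\omega^{ht}\ket h$ is a unit vector, so the probability that the $(V,V^\perp)$-measurement applied to $\mu_t$ returns ``in $V^\perp$'' equals $\norm{P_{V^\perp}\mu_t}^2$; since $V$ is fixed and $\mu_t$ is determined by $t$, this is a function of $t$ alone, which is the ``depends only on $t$'' assertion. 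For $t\in\{0,\ldots,r-2\}$ we have $\mu_t\in V$ by definition of $V$, hence $P_{V^\perp}\mu_t=0$ and the outcome is ``in $V$'' with probability $1$.

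For the quantitative claim at $t=r-1$, I would use that $\norm{P_{V^\perp}\mu_{r-1}}$ is exactly the distance from $\mu_{r-1}$ to $V=\mathrm{span}(\mu_0,\ldots,\mu_{r-2})$, together with the base-times-height identity for Gram determinants. Write $G_k$ for the Gram matrix of $\mu_0,\ldots,\mu_{k-1}$. Applying Gram--Schmidt gives $\det G_{r}=\det G_{r-1}\cdot \mathrm{dist}(\mu_{r-1},V)^2$. The rows of the matrix $A$ of \eqref{amat} are precisely $\mu_0,\ldots,\mu_{r-1}$, so $\det G_{r}=\det(A^*A)$, and therefore
$$\norm{P_{V^\perp}\mu_{r-1}}^2=\mathrm{dist}(\mu_{r-1},V)^2=\frac{\det(A^*A)}{\det G_{r-1}}.$$

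It remains to bound $\det G_{r-1}$ from above. Expanding it, again via Gram--Schmidt, as the product of the squared distances $\mathrm{dist}\big(\mu_{t},\mathrm{span}(\mu_0,\ldots,\mu_{t-1})\big)^2$ for $t=0,\ldots,r-2$ (this is the Gram-determinant form of Hadamard's inequality, cf. Fact~\ref{fact:hadamard}), and using $\norm{\mu_t}=1$, we get $\det G_{r-1}\le\prod_{t=0}^{r-2}\norm{\mu_t}^2=1$. Hence $\norm{P_{V^\perp}\mu_{r-1}}^2\ge\det(A^*A)$, and Lemma~\ref{lem:Vandermonde-bound} yields $\det(A^*A)=\Omega\!\big(q^{-c^2}(q/r)^r\big)$, as required.

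I do not expect a genuine obstacle: the analytic heavy lifting has already been done in Lemma~\ref{lem:Vandermonde-bound}. The only points needing care are the identification of the measurement probability with the squared distance to $V$, and taking the inequality in the correct direction --- an \emph{upper} bound on the ``base'' $\det G_{r-1}$, so that dividing $\det(A^*A)$ by it produces a \emph{lower} bound on the ``height'' $\mathrm{dist}(\mu_{r-1},V)^2$.
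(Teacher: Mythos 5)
Your proposal is correct and follows essentially the same route as the paper: the paper also identifies the ``in $V^\perp$'' probability with the squared length of the component of $\mu_{r-1}$ orthogonal to $V$, writes it as $\det(A^*A)/\det(A_0^*A_0)$ where $A_0^*A_0$ is exactly your Gram matrix $G_{r-1}$, bounds the denominator by $1$ via Hadamard's inequality, and invokes Lemma~\ref{lem:Vandermonde-bound} for the numerator. No differences worth noting.
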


\begin {proof}
If $t\in \{0,\ldots,r-2\}$, then $\mu_t \in V$ and ``in $V$" is obtained with probability $1$. 
If $t\in \{r-1,\ldots,q-1\}$, the probability of obtaining "in $V^\perp"$ 
is the square of the length of the component of $\mu_{t}$ orthogonal
to $V$ and thus depends only on $t$. 

Let
$\mu$ be the component of $\mu_{r-1}$ orthogonal to
$V$. The probability that $\mu_{r-1}$ is measured
``in $V^\perp$" is then  $|\mu|^{2}$. 
Let $A_0$ be the left $r\times (r-1)$ sub-matrix of $A$
in \eqref{amat}.
Then 
$$
|\mu|^2=\frac{|\det A^*A|}{ |\det A_0^*A_0|} \geq |\det A^*A|= 
\Omega\left (q^{-c^{2}} \left (\frac{q}{r} \right )^{r}\right ).$$
Note that we used Hadamard's inequality (Fact~\ref{fact:hadamard}) 
for the bound $|\det A_0^*A_0| \leq 1$
and Lemma \ref{lem:Vandermonde-bound} for estimating $|\det A^*A|$.
\end{proof}

\medskip

\subsection{Implementation of the measurement.} \label{ssec:mmt}
The $(V, V^{\perp})$ measurement acts on $O(\log q)$ qubits. Using universality constructions and 
the Solovay-Kitaev theorem, it is well known that an arbitrary unitary operator can be approximated 
using an exponential number of elementary gates in the number of qubits. 
\begin{fact} \label{fact:approx_unit} 
\cite{NC02} An arbitrary unitary operation $U$ on $t$ qubits can be simulated to error $\epsilon$ using
$O( t^{2}  4^{t} \log^{c} (t^{2}4^{t}/\epsilon))$ elementary gates.  
\end{fact}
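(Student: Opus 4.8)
The plan is to prove Fact~\ref{fact:approx_unit} by assembling three standard ingredients in sequence: (i) decomposition of an arbitrary unitary into two-level unitaries, (ii) implementation of each two-level unitary by controlled single-qubit rotations and CNOTs, and (iii) the Solovay--Kitaev theorem to approximate each single-qubit rotation from a fixed finite universal gate set, together with a linear error-accumulation bound. I would cite (iii) and treat the rest as bookkeeping.

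First I would recall that any $U\in\mathrm{U}(2^t)$ can be written as a product of at most $2^{t-1}(2^t-1)=O(4^t)$ two-level unitaries, i.e. unitaries that act nontrivially only on the span of two computational basis states $\ket{a},\ket{b}$ with $a,b\in\{0,1\}^t$. This is the usual Gaussian-elimination argument: pick a two-level rotation that zeroes one off-diagonal entry of the first column, repeat down the column, then induct on the remaining $(2^t-1)\times(2^t-1)$ block. Next, each two-level unitary is implemented by: conjugating with $O(t)$ CNOTs along a Gray-code path from $a$ to $b$ so that the two active basis states differ in a single coordinate, applying a single-qubit gate controlled on the remaining $t-1$ bits, then undoing the conjugation. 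A $(t-1)$-controlled single-qubit gate costs $O(t)$ Toffoli/CNOT gates plus $O(1)$ single-qubit gates using work qubits, or $O(t^2)$ without ancillas; being generous, each two-level unitary thus costs $O(t^2)$ operations, each of which is either a CNOT or an arbitrary single-qubit gate. Hence $U$ is expressed exactly as a product of $M=O(t^2 4^t)$ such operations.

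Then I would fix a universal single-qubit gate set closed under inverses (e.g. $\{H,T\}$, together with CNOT), and invoke the Solovay--Kitaev theorem: any single-qubit unitary can be approximated in operator norm to precision $\delta$ by a product of $O(\log^{c}(1/\delta))$ gates from this set, for an absolute constant $c$ (one may take $c$ slightly below $4$ in the original version, smaller in improved ones). Replacing each single-qubit gate in the exact decomposition of $U$ by such a $\delta$-approximation and leaving the CNOTs exact, the total error is controlled by the telescoping estimate $\norm{U_1\cdots U_M-\tilde U_1\cdots\tilde U_M}\le\sum_{i=1}^{M}\norm{U_i-\tilde U_i}$, valid because all factors are unitary and hence of operator norm $1$. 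Choosing $\delta=\epsilon/M$ makes the total error at most $M\delta=\epsilon$, and the total gate count is $M\cdot O(\log^{c}(1/\delta))=O\!\left(t^2 4^t\log^{c}(t^2 4^t/\epsilon)\right)$, which is exactly the claimed bound.

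I do not expect a genuine obstacle here: the only substantive input is the Solovay--Kitaev recursion (the ``$\epsilon$-net, then iterated group-commutator'' construction), which I would simply cite from \cite{NC02}. The one point requiring care is the error accounting — one must put both the $O(4^t)$ factor from the two-level decomposition and the $O(t^2)$ factor per two-level unitary inside the target precision, i.e. approximate each single-qubit gate to precision $\epsilon/\poly(t,4^t)$, so that the per-gate cost is $O(\log^{c}(\poly(t,4^t)/\epsilon))=O(\log^{c}(t^2 4^t/\epsilon))$ and the stated form of the bound comes out cleanly.
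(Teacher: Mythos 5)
Your proposal is correct and is precisely the standard argument behind the result the paper cites: the paper offers no proof of Fact~\ref{fact:approx_unit}, simply invoking \cite{NC02}, and your chain of two-level decomposition ($O(4^t)$ factors), Gray-code/controlled-gate implementation ($O(t^2)$ per factor), Solovay--Kitaev approximation of each single-qubit gate, and subadditive error accounting with $\delta=\epsilon/M$ is exactly how that reference establishes the stated $O(t^2 4^t \log^c(t^2 4^t/\epsilon))$ bound. No gap.
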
 
\noindent The ability to implement an arbitrary unitary operation on $\log q$ qubits implies the ability to perform the measurement $(W, W^\perp)$
for an arbitrary subspace $W \subset \C^{q}$. Denote the quantum state corresponding to unit vector $w \in \C^{q}$ as $\ket{w} := \sum_{i \in [q]} w_{i} \ket{i}$. 
Let $k$ be the dimension of $W$ and let $w_{1}, w_{2}, \cdots, w_{k}$ be an orthonormal basis for $W$. Let 
$U_{W}$ be a unitary operation that maps the standard basis vectors $\ket{i} \to \ket{w_{i}}$. 
Then the measurement in $(W, W^{\perp})$ on state $\ket{\phi}$ can be implemented by first computing  
$U_{W}^{-1}\ket{\phi}$ and then measuring in the standard basis. The state $\ket{\phi}$ belongs to $W$ if and 
only if the result of measurement in the standard basis belongs to the set $\{1, 2, \cdots, k\}$. As 
the $(V, V^{\perp})$ measurement is on $\log q$ qubits, by 
Fact~\ref{fact:approx_unit}
we have, 

\begin{claim} 
\label{claim:measure-implement}
The measurement $(V,V^\perp)$ can be implemented to precision $1/q^{O(1)}$ 
in time $\widetilde{O}(q^{2})$.
\end{claim}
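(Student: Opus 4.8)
Claim \ref{claim:measure-implement} asserts that the $(V, V^\perp)$ measurement on $\log q$ qubits can be implemented to precision $1/q^{O(1)}$ in time $\widetilde{O}(q^2)$.

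\medskip

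The plan is to follow exactly the recipe sketched in subsection \ref{ssec:mmt}: reduce the measurement to the approximate implementation of a single unitary on $t = \lceil \log q \rceil$ qubits, then invoke Fact \ref{fact:approx_unit}. First I would fix an orthonormal basis $v_1, \ldots, v_{r-1}$ of $V$ (obtained, say, by Gram--Schmidt on $\mu_0, \ldots, \mu_{r-2}$, or simply any orthonormal basis — the implementation does not care which one) and extend it to an orthonormal basis $v_1, \ldots, v_q$ of $\C^q$, where $v_r, \ldots, v_q$ span $V^\perp$. Let $U_V$ be the unitary sending the $i$-th standard basis vector $\ket{i}$ to $\ket{v_i}$. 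Then applying $U_V^{-1}$ to the input state $\ket{\phi}$ and measuring in the computational basis yields an outcome in $\{1, \ldots, r-1\}$ precisely when $\ket{\phi} \in V$, and in $\{r, \ldots, q\}$ precisely when $\ket{\phi} \in V^\perp$; for a general state this is exactly the desired projective measurement in the decomposition $\C^q = V \oplus V^\perp$. So the entire content reduces to: implement $U_V^{-1}$ (equivalently $U_V$) to small error, plus $O(\log q)$ single-qubit measurements, which are free.

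\medskip

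Next I would apply Fact \ref{fact:approx_unit} with $t = \lceil \log q \rceil$ qubits (so that $2^t \geq q$; pad the Hilbert space from dimension $q$ up to $2^t < 2q$, extending $U_V$ by the identity on the padding, which changes nothing about the measurement outcomes we care about). Taking the target error $\epsilon = 1/q^{O(1)}$, the gate count from Fact \ref{fact:approx_unit} is
\[
O\!\left( t^2 4^t \log^c\!\left( t^2 4^t / \epsilon \right) \right)
= O\!\left( (\log q)^2 \, q^2 \, \polylog(q) \right)
= \widetilde{O}(q^2),
\]
since $4^t = (2^t)^2 = O(q^2)$ and $t^2 = O((\log q)^2)$ is polylogarithmic, and the $\log^c(\cdot)$ factor contributes only another polylogarithmic term. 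Each elementary gate runs in constant time, so the total running time is $\widetilde{O}(q^2)$, and the approximation error in the implemented measurement is controlled by the operator-norm error $\epsilon = 1/q^{O(1)}$ in approximating $U_V^{-1}$: a standard argument (the measurement outcome probabilities are Lipschitz in the unitary, with constant $1$ in operator norm per outcome) shows the induced error on each outcome probability is $O(\epsilon) = 1/q^{O(1)}$.

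\medskip

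The only genuine subtlety — and it is minor — is bookkeeping the error: Fact \ref{fact:approx_unit} gives an approximation in some gate-complexity-friendly norm, and one must check that operator-norm closeness of unitaries $\norm{U_V - \widetilde{U}} \leq \epsilon$ implies that the resulting measurement statistics differ from the ideal ones by at most $O(\epsilon)$ in total variation. This follows because for any state $\ket{\phi}$ and any projector $\Pi$ (here $\Pi$ is projection onto the span of $\ket{1}, \ldots, \ket{r-1}$), $\big| \norm{\Pi U_V \ket{\phi}}^2 - \norm{\Pi \widetilde{U} \ket{\phi}}^2 \big| \leq 2 \norm{(U_V - \widetilde U)\ket{\phi}} \leq 2\epsilon$, using $|\,\norm{a}^2 - \norm{b}^2| = |\norm a - \norm b|\,(\norm a + \norm b) \leq \norm{a - b} \cdot 2$. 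Choosing the exponent in $\epsilon = 1/q^{O(1)}$ large enough absorbs this constant, so the $(V, V^\perp)$ measurement is implemented to precision $1/q^{O(1)}$ in time $\widetilde{O}(q^2)$, as claimed.
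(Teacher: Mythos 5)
Your proposal is correct and follows essentially the same route as the paper: reduce the $(V,V^\perp)$ measurement to applying $U_V^{-1}$ (where $U_V$ maps the standard basis to an orthonormal basis of $V$ extended to $\C^q$) followed by a computational-basis measurement, then invoke Fact~\ref{fact:approx_unit} with $t=\lceil\log q\rceil$ so that $4^t=O(q^2)$ gives the $\widetilde{O}(q^2)$ gate count. Your additional bookkeeping of how operator-norm error on the unitary propagates to the outcome probabilities is a detail the paper leaves implicit, but the argument is the same.
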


\noindent The implementation of the $(V, V^{\perp})$ measurement above shows that the sampling procedure can be performed efficiently.  
The procedure yields a sample from an $(A, \ell)$ distribution with $|A|=(q-r)+1$ 
when the measurement outcome is $V^{\perp}$. By Lemma \ref{lem:measure-bound} the outcome $V^{\perp}$ occurs with probability
$\Omega\left (q^{-c^{2}}\left( \frac{q}{r} \right )^{r} \right)$ if $(u,s)=r-1$. 
As $u$ is uniformly random on $\F_{q}^{n}$ at least a $\Omega\left (q^{-c^{2}-1}\left( \frac{q}{r} \right )^{r} \right)$ fraction of the samples are retained. We therefore have the following proposition, 
\begin{proposition}
\label{prop:sampling}
There is a quantum procedure that that runs in time $\widetilde{O}(q^{2})$, and given a pair $(u,\phi_s^u)$ 
where $u\in \Z_q^n$ is uniformly random and $\phi_s^u=\mu_{(u,s)}$, with probability at least
$\Omega\left (q^{-c^{2} -1} \left( \frac{q}{r} \right )^{r} \right )$ 
returns a sample from a $(A, \ell)$ distribution with $|A|=c+1$ and $\ell(x)=(s,x)$. 
\end{proposition}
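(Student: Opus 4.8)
The plan is to combine three things already in hand: the efficient implementation of the $(V,V^\perp)$-measurement (Claim~\ref{claim:measure-implement}), its effect on the states $\mu_t$ (Lemma~\ref{lem:measure-bound}), and the fact that $(u,s)$ is uniform over $\F_q$ for uniform $u$. The procedure itself is the obvious one: on input $(u,\phi_s^u)$, apply the $(V,V^\perp)$-measurement of subsection~\ref{ssec:mmt} to $\phi_s^u=\mu_{(u,s)}$; output $u$ if the outcome is ``in $V^\perp$'', and output nothing otherwise. By Claim~\ref{claim:measure-implement} one run costs time $\widetilde{O}(q^2)$. We may assume $s\neq 0$: if $s=0$ then $f_s$ does not depend on $h$, which is detected by two oracle calls $f_s(x,h_1),f_s(x,h_2)$ (using injectivity of $f$ and primality of $q$, these agree for some $x$ iff $s=0$), so this case is disposed of separately.

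To identify the law of the output, write $\ell(x)=(s,x)$ and $V_\alpha=\{u\in\F_q^n:\ell(u)=\alpha\}$. Lemma~\ref{lem:measure-bound} says the outcome ``in $V^\perp$'' occurs with probability $0$ on $\mu_t$ when $t\le r-2$, and with a probability $p_t$ depending \emph{only} on $t$ when $t\in A:=\{r-1,\ldots,q-1\}$; set $p_\alpha:=0$ for $\alpha\notin A$. Then the procedure outputs a fixed vector $u$ with probability $q^{-n}p_{\ell(u)}$, which is constant on each $V_\alpha$ and vanishes whenever $\alpha\notin A$; hence, conditioned on producing an output, the output follows an $(A,\ell)$-distribution in the sense of Definition~\ref{def:dist}, and $|A|=q-r+1=c+1$ as required (nothing forces the $\ell$-image to exhaust $A$, which Definition~\ref{def:dist} permits). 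For the success probability, since $s\neq 0$ the map $u\mapsto(s,u)$ is onto $\F_q$ with equal-size fibres, so $(u,s)$ is uniform on $\F_q$; therefore the probability of producing an output equals $\sum_{\alpha\in A}\Pr[(u,s)=\alpha]\,p_\alpha=\frac1q\sum_{\alpha\in A}p_\alpha\ge\frac1q\,p_{r-1}=\Omega(q^{-c^2-1}(q/r)^r)$ by Lemma~\ref{lem:measure-bound}.

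I expect the one point needing care — the main obstacle — to be the approximation error: Claim~\ref{claim:measure-implement} realizes the measurement only to accuracy $\epsilon=1/q^{O(1)}$, so the implemented POVM elements differ from the ideal projectors by $O(\epsilon)$ in operator norm, which perturbs every $p_t$ by $O(\epsilon)$; in particular the ``leakage'' values $p_t$ for $t\le r-2$ become $O(\epsilon)$ rather than exactly $0$. Propagating this through the conditioning on the keep-event, which has probability only $\Omega(q^{-c^2-1}(q/r)^r)$, a routine estimate shows the output law lies within total variation $O\!\left(\epsilon\,q^{c^2+1}(r/q)^r\right)$ of a genuine $(A,\ell)$-distribution. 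Since the exponent hidden in $\epsilon=1/q^{O(1)}$ is under our control, and enlarging it keeps the running time $\widetilde{O}(q^2)$, we may take $\epsilon\le q^{-c^2-1-k}$ for any constant $k$, so this distance is $q^{-\Omega(1)}$ — negligible against the $\poly$-many samples later consumed by the algorithm of Theorem~\ref{thm:LFSconstant}. Absorbing this error into the statement yields the proposition.
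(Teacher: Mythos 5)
Your proposal is correct and follows essentially the same route as the paper: apply the $(V,V^\perp)$-measurement of Claim~\ref{claim:measure-implement} and retain $u$ on outcome ``in $V^\perp$'', invoke Lemma~\ref{lem:measure-bound} to see that the retention probability depends only on $t=(u,s)$ and vanishes for $t\le r-2$ (so the conditional law is an $(A,\ell)$-distribution), and use uniformity of $(u,s)$ together with the $t=r-1$ bound to get the stated success probability. Your extra care about the $s=0$ case and about propagating the $1/q^{O(1)}$ implementation error of the measurement are refinements the paper leaves implicit, not a different argument.
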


\subsection{Recovering the secret $s$}

In order to solve $\HMS(q, n, r)$ given a scalar multiple $s_{0}=\lambda s$ found using Theorem~\ref{thm:LFS}, we need to find the scalar 
$\lambda$. We show that using $O(q)$ further 
input pairs we can find the value of $\lambda$ using 
a simple trial and error procedure given by the following lemma. 

\begin{lemma}
\label{lem:is_fixed_mu}
Given $t\in \F_q$ and a state $\tau\in\C^{r}$,
there is a quantum procedure that return YES
with probability 1 if $\tau=\mu_{t}$, while
if $\tau=\mu_{t'}$ for some $t'\in \F_q\setminus\{t\}$, it
returns YES with probability at most $p:=\begin{cases} 1/4 \text{ if $q< 3r/2$} \\ 
1- O(1/q) \text{ otherwise.}  \end{cases}$
\end{lemma}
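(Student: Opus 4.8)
The plan is the obvious one. Since $q$, $H$ and $t$ are all explicitly known, we can write down the unit vector $\mu_t\in\C^r$, and the test simply asks whether $\tau$ lies on the line it spans: perform the two--outcome projective measurement $\{\ket{\mu_t}\bra{\mu_t},\,I-\ket{\mu_t}\bra{\mu_t}\}$ on $\tau$ and answer YES exactly when the first outcome occurs. As in the discussion preceding Claim~\ref{claim:measure-implement}, this is a measurement on $O(\log q)$ qubits: approximate (via Fact~\ref{fact:approx_unit}) the unitary $U$ with $U\ket 0=\ket{\mu_t}$, apply $U^{-1}$ to $\tau$, and measure in the computational basis, reading YES iff the outcome is $0$. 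This runs in time $\widetilde{O}(q^{2})$ and is exact up to precision $1/q^{O(1)}$. Completeness is then immediate: if $\tau=\mu_t$ the first outcome occurs with probability $|\braket{\mu_t}{\mu_t}|^2=1$.

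\textbf{Reducing soundness to an exponential sum.} If $\tau=\mu_{t'}$ with $t'\neq t$, the YES--probability is $|\braket{\mu_t}{\mu_{t'}}|^2$. A direct computation gives $\braket{\mu_t}{\mu_{t'}}=\tfrac1r\sum_{h\in H}\omega^{(t'-t)h}$, so, setting $\delta:=t'-t\in\F_q\setminus\{0\}$ and $S(\delta):=\sum_{h\in H}\omega^{\delta h}$, everything reduces to upper bounding $|S(\delta)|^2/r^2$. I would use two elementary identities, both leaning on $q$ being prime: (i) $\sum_{h\in\Z_q}\omega^{\delta h}=0$, hence $S(\delta)=-\sum_{h\in\Z_q\setminus H}\omega^{\delta h}$ and $|S(\delta)|\le q-r$; and (ii) expanding the modulus squared and cancelling the antisymmetric part, $|S(\delta)|^2=r^2-2\sum_{h,h'\in H}\sin^2\!\big(\pi\delta(h-h')/q\big)$, where every term with $h\neq h'$ is nonzero (primality) and hence at least $\sin^2(\pi/q)=\Omega(1/q^2)$.

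\textbf{The two regimes.} When $q<3r/2$, identity (i) gives $|S(\delta)|\le q-r<r/2$, so $|S(\delta)|^2/r^2<1/4$, which is the first case. When $q\ge 3r/2$ I would split further. If moreover $q<2r$, then $|\Z_q\setminus H|=q-r<r$, so (i) yields $|S(\delta)|^2/r^2\le (q-r)^2/r^2\le 1-1/q$ after a short calculation, i.e.\ the clean $1-\Omega(1/q)$ bound. If $q\ge 2r$, fall back on (ii): since $|H|\ge 2$ there is a pair $h\neq h'$, so $|S(\delta)|^2/r^2\le 1-\Omega(1/(r^2q^2))$. Taking the worse of these bounds over all $\delta$ and all $H$ gives the stated $p$.

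\textbf{Where the difficulty lies.} The delicate point is the subrange $q\ge 2r$: there $\Z_q\setminus H$ is too large for identity (i), and $H$ can be tightly clustered — for $H=\{0,\dots,r-1\}$ and $\delta=1$ one has $|S(1)|/r=|\sin(\pi r/q)|/\big(r\sin(\pi/q)\big)=1-\Theta(r^2/q^2)$ — so the true spectral gap $1-\max_{\delta\neq0}|S(\delta)|^2/r^2$ is only $\Theta(r^2/q^2)$, which is $\Omega(1/q)$ only when $r=\Omega(\sqrt q)$. I therefore expect the clean $1-\Omega(1/q)$ estimate to be proved under the hypothesis $q<2r$ (which already covers the parameter range $q-r=O(1)$ driving Corollary~\ref{cor:eff}, where in fact $q<3r/2$ and the $1/4$ bound applies), and in general one only obtains the weaker $1-\Omega(1/q^{O(1)})$; that still suffices, since the calling algorithm amplifies by repeating the test $\poly(q)$ times. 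A minor routine step is to check that replacing the exact measurement by its $1/q^{O(1)}$--precise implementation perturbs all YES--probabilities by only $1/q^{O(1)}$.
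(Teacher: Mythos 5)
Your procedure and the reduction to bounding $\bigl|\frac1r\sum_{h\in H}\omega^{\delta h}\bigr|^2$ are exactly the paper's, and your treatment of the case $q<3r/2$ (replacing the sum over $H$ by the complementary sum over $\Z_q\setminus H$, of size $q-r<r/2$) is identical to the paper's. The only divergence is in the ``otherwise'' case: the paper notes that the average of $r\ge 2$ distinct $q$-th roots of unity is a convex combination of their pairwise midpoints, each of modulus at most $\bigl|\tfrac12(1+e^{2\pi i/q})\bigr|=\cos(\pi/q)$, giving the bound $\cos^2(\pi/q)$, whereas you expand $|S(\delta)|^2=r^2-2\sum_{h,h'}\sin^2(\pi\delta(h-h')/q)$ and add a sub-case $3r/2\le q<2r$ handled again by the complement trick. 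These are essentially equivalent: if you sum your $\sin^2$ identity over all $r(r-1)$ ordered pairs rather than one, you get $1-2(1-1/r)\sin^2(\pi/q)\le\cos^2(\pi/q)=1-\Omega(1/q^2)$, which matches the paper and is stronger than the $1-\Omega(1/(r^2q^2))$ you settle for. Your closing worry is well founded, but it points at a (harmless) inaccuracy in the paper's statement rather than a gap in your argument: the paper's own computation also only yields $\cos^2(\pi/q)=1-\Theta(1/q^2)$, which it then records as $1-O(1/q)$, and your example $H=\{0,\dots,r-1\}$ with $\delta=1$ (or simply $r=2$ with adjacent roots) shows a gap of order $1/q$ is genuinely unattainable when $r=o(\sqrt q)$. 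Neither version affects the application, since the lemma is invoked only with $\poly(q)$ repetitions to recover the scalar $\lambda$, for which a $1-\Omega(1/q^{2})$ soundness bound suffices.
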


\begin{proof}
Let $V_{t}$ be the one dimensional space spanned by $\mu_{t}$ and $V_{t}^{\perp}$ be its orthogonal complement. 
By Claim \ref{claim:measure-implement} we have an efficient implementation of the measurement in $(V_{t}, V_{t}^{\perp})$. 
If the state $\tau=\mu_{t}$ this measurement returns YES with probability $1$. The probability of returning YES for $\mu_{t'}$ is the squared inner product $|\mu_{t}^{\dagger}. \mu_{t'}|^{2}$.  We upper bound the probability of output YES for the 
two cases. 

Let $t' - t= \beta$ for a non zero $\beta \in F_{q}$.
Let $c=q-r$, the case $q<3r/2$ is equivalent to $q>3c$. For this case we have the estimate, 
\begin{eqnarray*}
|\mu_{t}^{\dagger}. \mu_{t'}|^{2}  = |\frac{1}{r}\sum_{h\in H}\omega^{h\beta}|^2 &=&
|\frac{1}{r}\sum_{h\in \F_q}\omega^{h\beta }-
\frac{1}{r}\sum_{h\in \F_q\setminus H}\omega^{h\beta }|^2 \\
&=&
|\frac{1}{r}\sum_{h\in \Z_q\setminus H}\omega^{h\beta }|^2 \\
&\leq &
\frac{c^2}{r^2}=\frac{c^2}{(q-c)^2}\leq \frac{c^2}{4c^2}=1/4. 
\end{eqnarray*}
If $q>3r/2$, then we use the following estimate, 
\begin{eqnarray*}
|\mu_{t}^{\dagger}. \mu_{t'}|^{2} =
|\frac{1}{r}\sum_{h\in H}\omega^{h\beta }|^2 
\leq
|\frac{1}{2}(1+e^{2\pi i/q})|^2
=1-O(1/q).
\end{eqnarray*} 
The above inequality follows as for $|H|>1$ the sum $\frac{1}{r}\sum_{h\in H}\omega^{h\beta}$ lies in the convex hull of the points $\zeta_{jk} := (\frac{\omega^{j} + \omega^{k}}{2})$ for $j, k \in [q], j< k$. 
The maximum norm for $\zeta_{jk}$ occurs when $k=j+1 \mod q$ and for this case the norm is $|\frac{1}{2}(1+e^{2\pi i/q})| $, the inequality follows. 

\end{proof}

\noindent Combining the results proved in this section with Algorithm \ref{alg:LFS}, we next obtain a quantum algorithm for $\HMS(q,n,r)$ via a reduction to 
$LFS(q, n, (q-r)+1)$ for the case of prime $q$. 

Theorem~\ref{thm:LFS} shows that given $N= \Omega ( \binom{n+d}{d} d^{2} \log (\binom{n+d}{d} )) $ samples from an $(A, \ell)$ distribution, 
a scalar multiple of the function $\ell$ can be found with constant probability. Proposition~\ref{prop:sampling} above shows that the expected time to obtain $N$ samples  
from the $(A, \ell)$ distribution is $\widetilde{O}(Nq^{c^{2} +3})$ where 
we used that each measurment requires time $\widetilde{O}(q^{2})$ and ignored the factor $\left( r/q\right )^{r} <1$. 
The number of samples and the time required for determining the scalar $\lambda$ in Lemma\ref{lem:is_fixed_mu} are negligible compared to these quantities. 
We therefore have the following theorem, 
\begin{theorem}  
\label{thm:small-prime2}
Let $q$ be a prime and let $c=q-r$.
Then there is a quantum algorithm which solves $\HMS(q,n,r)$
with sample complexity $\widetilde{O}( n^{c+1}q^{c^{2} +1})$ and in time $\widetilde{O}( n^{c+1}q^{c^{2} +3})$.
\end{theorem}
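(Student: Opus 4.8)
The plan is to assemble the three components developed in this section into one pipeline: a \emph{filtering} stage that converts raw preprocessed $\HMS$ samples into samples of an $(A,\ell)$-distribution, a \emph{learning} stage that feeds these to Algorithm~\ref{alg:LFS}, and a \emph{calibration} stage that recovers the leftover scalar. Set $c=q-r$ and $d=c+1$, take $A=\{r-1,\ldots,q-1\}$ and $\ell(x)=(s,x)$, so $|A|=d$. Starting from the standard preprocessing, which yields pairs $(u,\phi_s^u)$ with $u$ uniform on $\F_q^n$ and $\phi_s^u=\mu_{(u,s)}$, I would run the $(V,V^\perp)$-measurement of Proposition~\ref{prop:sampling} on each pair and keep $u$ only when the outcome is ``in $V^\perp$''; by that proposition the retained $u$'s follow (up to the implementation error discussed below) an $(A,\ell)$-distribution, and a pair is retained with probability $\Omega\!\left(q^{-c^2-1}(q/r)^r\right)$.

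For the bookkeeping, Theorem~\ref{thm:LFS} needs $N=\Omega\!\left(\binom{n+d}{d}d^2\log\binom{n+d}{d}\right)=\widetilde O(n^{c+1})$ samples of the $(A,\ell)$-distribution, after which it returns a nonzero scalar multiple $s_0=\lambda s$ with probability $\ge 1/2$ in time polynomial in $\log q$, $\binom{n+d}{d}$ and $N$. Since $(q/r)^r\ge 1$, each raw pair survives with probability $\Omega(q^{-c^2-1})$, so $\widetilde O(N q^{c^2+1})=\widetilde O(n^{c+1}q^{c^2+1})$ raw pairs suffice in expectation — and, by a Chernoff bound, with probability $1-o(1)$ — to accumulate $N$ filtered samples; each filtering step costs $\widetilde O(q^2)$ by Claim~\ref{claim:measure-implement}, giving total filtering time $\widetilde O(n^{c+1}q^{c^2+3})$, which also dominates the running time of Algorithm~\ref{alg:LFS}. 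To recover $\lambda$: given $s_0=\lambda s$ and a fresh pair $(u,\mu_{(u,s)})$, for a candidate $\lambda'\in\F_q^*$ one has $(u,s)=(\lambda')^{-1}(u,s_0)$ precisely when $\lambda'=\lambda$, so I apply the test of Lemma~\ref{lem:is_fixed_mu} with $t=(\lambda')^{-1}(u,s_0)$ to the state $\mu_{(u,s)}$; the true $\lambda$ always passes while each of the $\le q-1$ wrong candidates passes with probability $\le p<1$, so $O(q\,\polylog q)$ fresh pairs kill all wrong candidates with high probability by a union bound. This last stage uses $O(q\,\polylog q)$ extra samples and $\widetilde O(q^3)$ time, negligible against the above; after boosting the constant success probability by repetition, the resulting algorithm has sample complexity $\widetilde O(n^{c+1}q^{c^2+1})$ and running time $\widetilde O(n^{c+1}q^{c^2+3})$.

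The step I expect to be the real obstacle is handling the fact that, by Claim~\ref{claim:measure-implement}, the $(V,V^\perp)$-measurement is implemented only to precision $1/q^{O(1)}$, so the filtered samples are not exactly $(A,\ell)$-distributed: the retention probability of $(u,\mu_t)$ still depends only on $t=\ell(u)$ (so $\ell$-symmetry holds exactly), but a value $t\in\{0,\ldots,r-2\}$, for which the exact measurement would return ``in $V$'' with certainty, can now leak into the retained set with probability $O(1/q^{O(1)})$, pushing the $\ell$-image slightly outside $A$. To deal with this I would pick the per-measurement precision to be $1/q^{C}$ for a large enough constant $C$, so that one filtered sample is within total variation distance $o(1/(nq)^{O(1)})$ of an exact $(A,\ell)$-distribution; a union bound over the $\poly(n,q)$ filtered samples then couples the actual run with an idealized run in which every sample is genuinely $(A,\ell)$-distributed, and Lemma~\ref{lem:poly-shrink} together with Theorem~\ref{thm:LFS} applies to the idealized run with only an $o(1)$ loss. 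The remaining checks — that Proposition~\ref{prop:sampling}'s retention estimate, proved in Lemma~\ref{lem:measure-bound} only for $t=r-1$, yields the stated overall retention probability because $(u,s)$ is uniform over $\F_q$, and that the factoring and trial-and-error steps of Algorithm~\ref{alg:LFS} correctly produce $s_0=\lambda s$ once the idealization holds — are routine given the lemmas already established.
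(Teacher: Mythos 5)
Your proposal follows the paper's proof essentially verbatim: the same three-stage pipeline (filter via Proposition~\ref{prop:sampling}, learn via Theorem~\ref{thm:LFS}, calibrate via Lemma~\ref{lem:is_fixed_mu}), the same choice $d=c+1$, $A=\{r-1,\ldots,q-1\}$, and the same accounting giving $\widetilde O(n^{c+1}q^{c^2+1})$ samples and $\widetilde O(n^{c+1}q^{c^2+3})$ time. The one point where you go beyond the paper is in explicitly handling the $1/q^{O(1)}$ implementation error of the $(V,V^\perp)$-measurement via a precision-and-coupling argument --- the paper's proof treats the filtered samples as exactly $(A,\ell)$-distributed and passes over this issue in silence --- but this is a refinement of the same argument, not a different route.
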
 
\noindent The algorithm runs in time polynomial in $n, \log q$ for the case when $q=\poly(n)$. We therefore we have the following corollary, 
\begin{corollary} \label{cor:eff} 
Let $q=\poly(n)$ be a prime number and $c=q-r$ be a constant, then there is an efficient quantum algorithm for $\HMS(q,n,r)$. 
\end{corollary}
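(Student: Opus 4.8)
The plan is to derive the corollary directly from Theorem~\ref{thm:small-prime2}, which already supplies a quantum algorithm for $\HMS(q,n,r)$ with sample complexity $\widetilde{O}(n^{c+1}q^{c^{2}+1})$ and running time $\widetilde{O}(n^{c+1}q^{c^{2}+3})$, where $c=q-r$. So the only remaining task is to check that, under the two hypotheses $q=\poly(n)$ and $c=O(1)$, both of these bounds collapse to $\poly(n)$, and hence to $\poly(n,\log q)$, which is precisely the notion of an \emph{efficient} algorithm fixed in the introduction.

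First I would make the constants explicit: write $q\leq n^{a}$ for some constant $a$ (this is what $q=\poly(n)$ means for large $n$), and let $c_{0}$ be the constant value of $c=q-r$. Then the running time $\widetilde{O}(n^{c+1}q^{c^{2}+3})$ is at most $\widetilde{O}(n^{c_{0}+1}\cdot n^{a(c_{0}^{2}+3)})=\widetilde{O}(n^{c_{0}+1+a(c_{0}^{2}+3)})$, a fixed power of $n$; the $\widetilde{O}$ hides only $\polylog(nq)$ factors, and since $q\leq n^{a}$ these are $\polylog(n)$, so the whole expression is $\poly(n)$. The identical substitution bounds the sample complexity by $\poly(n)$ as well. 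Finally, since $\log q\leq a\log n=O(\log n)$, a running time polynomial in $n$ is in particular polynomial in $n$ and $\log q$, so the algorithm is efficient in the required sense; the success probability of Theorem~\ref{thm:small-prime2} can be amplified to $1-o(1)$ by a constant number of independent repetitions, which does not affect the polynomial bounds.

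There is essentially no technical obstacle here: everything difficult has already been done in the material preceding Theorem~\ref{thm:small-prime2}, namely the reduction of $\HMS(q,n,r)$ to $\LFS(q,n,q-r+1)$ (Proposition~\ref{prop:sampling}), the efficient implementation of the $(V,V^{\perp})$ measurement (Claim~\ref{claim:measure-implement}), the correctness and efficiency of Algorithm~\ref{alg:LFS} (Theorem~\ref{thm:LFS}), and the recovery of the scalar $\lambda$ (Lemma~\ref{lem:is_fixed_mu}). The one point worth flagging is \emph{where} the hypothesis $c=O(1)$ is actually used: the exponent $c^{2}$ on $q$ originates in the Vandermonde determinant estimate of Lemma~\ref{lem:Vandermonde-bound}, propagated through Lemma~\ref{lem:measure-bound} and Proposition~\ref{prop:sampling} as the $q^{-c^{2}}$ loss in the sampling success probability. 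If $c$ were allowed to grow even as slowly as $\omega(\sqrt{\log n/\log q})$, the factor $q^{c^{2}}$ would already be superpolynomial and the argument would fail; keeping $c$ constant while $q=\poly(n)$ is exactly what makes $n^{c+1}$, $q^{c^{2}+3}$, and the hidden $\polylog$ factors simultaneously polynomial in $n$, which is all the corollary asserts.
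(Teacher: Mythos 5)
Your proposal is correct and matches the paper's own argument: the corollary is obtained by plugging $q=\poly(n)$ and $c=O(1)$ directly into the sample-complexity and running-time bounds of Theorem~\ref{thm:small-prime2} and observing that both become $\poly(n)=\poly(n,\log q)$. Your additional remarks on where the constancy of $c$ is actually needed are accurate but go beyond what the paper states.
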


\section*{Acknowledgments}
A part of the research was accomplished while the first two authors were
visiting the Centre for Quantum Technologies (CQT), National University
of Singapore. The research at CQT was partially funded by the
Singapore Ministry of Education and the National Research Foundation
under grant R-710-000-012-135. This research was supported in part by the
QuantERA ERA-NET Cofund project QuantAlgo and by the  Hungarian
National Research, Development and Innovation Office -- NKFIH, Grant
K115288.

\bibliographystyle{alpha}
\bibliography{hmsp}

\end{document}